\providecommand{\U}[1]{\protect\rule{.1in}{.1in}}
\newtheorem{theorem}{Theorem}
\newtheorem{corollary}[theorem]{Corollary}
\newtheorem{definition}[theorem]{Definition}
\newtheorem{example}[theorem]{Example}
\newtheorem{lemma}[theorem]{Lemma}
\newtheorem{proposition}[theorem]{Proposition}
\newtheorem{remark}[theorem]{Remark}
\newenvironment{proof}[1][Proof]{\noindent\textbf{#1.} }{\ \rule{0.5em}{0.5em}}
\newcommand{\hide}[1]{}
\newcommand{\cupdot}{\mathbin{\mathaccent\cdot\cup}}
\begin{document}

\title{Weighted propositional configuration logics: A specification language for architectures with quantitative features}
\author{Paulina Paraponiari$^{a}$, George Rahonis$^{b,}$\footnote{Corresponding author}\\Department of Mathematics\\Aristotle University of Thessaloniki\\54124 Thessaloniki, Greece\\$^{a}$parapavl@math.auth.gr, $^{b}$grahonis@math.auth.gr}
\date{}
\maketitle

\begin{abstract}
We introduce and investigate a weighted propositional configuration logic over  commutative semirings. Our logic is intended to serve as a specification language for software architectures with quantitative features. We prove an efficient construction of full normal forms and decidability of equivalence of formulas in this logic. We illustrate the  motivation of this work by describing well-known architectures equipped with quantitative characteristics using formulas in our logic.

\noindent \textit{Keywords:} Software architectures, Propositional configuration logic, Weighted propositional configuration logic 
\end{abstract}

\section{Introduction}
Architecture is a critical issue in design and development of  complex software systems. Whenever the construction of a software system is based on a ``good'' architecture, then the system satisfies most of its functional and quality requirements. But what are the characteristics of a ``good'' architecture and how can one design it? Despite the huge progress on software architecture, over almost three decades, the field remains relatively immature (cf. \cite{Ga:So} for a detailed presentation of the progress of software architecture). Several fundamental matters still remain open, for instance the distinction between architectures and their properties. 
To address these issues there is a need of a formal treatment of architectures. The very recent work of \cite{Ma:Co} contributes towards this direction by studying the relation among architectures and architecture styles. An architecture consists of a fixed number of components connected in a concrete topology. An architecture style describes a family of ``similar'' architectures, i.e., architectures with the same types of components and topologies. The authors of \cite{Ma:Co} introduced the propositional configuration logic (PCL for short) which was proved sufficient to describe architectures: the meaning of every PCL formula is a configuration set, which intuitively represents permissible component connections, and every architecture can be represented by a configuration set on the collection of its components. The first-order and second-order extensions of PCL described perfectly the concept of architecture styles. Therefore, PCL and its first- and second-order extensions constitute logics for the specification of architecture styles and hence, an important contribution to rigorous systems design (cf. \cite{Si:Ri}).  

In this paper we extend the work of \cite{Ma:Co} in a quantitative setup, and specifically we introduce and investigate a weighted PCL (wPCL for short) over a commutative semiring $(K, \oplus, \otimes, 0,  1)$. Our work is motivated as follows. Propositional configuration logic  of \cite{Ma:Co} describes in a strict mathematical way qualitative features of architectures. Nevertheless, in several practical applications also quantitative characteristics of architectures are required: the costs of the interactions among the components of an architecture, the time needed, the probability of the implementation of a concrete interaction, etc.  For instance, the development of IoT and cloud applications based on the well-known Publish/Subscribe architecture \cite{Pa:Pu,Ol:AP,Ya:Pr} requires quantitative characteristics of the architecture. Therefore, a specification language for the study of weighted architectures is needed and our wPCL is intended to play this role. wPCL consists of the PCL of \cite{Ma:Co} which is interpreted in the same way, and a copy of it which is interpreted quantitatively. This formulation has the advantage that practitioners can use the PCL exactly as they are used to, and the copy of it for the quantitative interpretation. The semantics of  wPCL formulas are polynomials with values in the semiring $K$. The semantics of the unweighted PCL formulas  take only the values $1$ and $0$ corresponding to $\mathrm{true}$ and $\mathrm{false}$, respectively. Weighted logics have been considered so far in other set-ups (cf. for instance \cite{Dr:Han}).    

The main contributions of our work are the following. 
\begin{itemize}
\item[i)] In our first main result, we prove that for every wPCL formula we can effectively construct, in doubly exponential time, an equivalent one in full normal form which is unique up to the equivalence relation. Computation of full normal forms can be done in an automatic way using the Maude system, following a method similar to the one for PCL in \cite{Ma:Co}.
\item[ii)] The second main result of our work states the decidability of  equivalence of wPCL formulas. Because of (i), two wPCL formulas are equivalent iff they have the same full normal form hence, we do not need to check equality for all possible models.      
\item[iii)] We introduce a notion of soundness (for the first time for weighted logics over semirings, according to our best knowledge), and as a third main result of the paper we show that wPCL is sound. 
\item[iv)] We succeed to describe in a strict logical way several well-known software architectures with additional quantitative characteristics, which are motivated by real applications. 
\end{itemize}
\noindent A preliminary version of this paper appeared in \cite{Pa:On}.

\section{Preliminaries}

A \emph{semiring} $(K,\oplus,\otimes,0,1)$\emph{ }consists of a set
$K,$ two binary operations $\oplus$ and $\otimes$ and two constant elements $0$
and $1$ such that $(K,\oplus,0)$ is a commutative monoid, $(K,\otimes,1)$ is a
monoid, multiplication $\otimes$ distributes over addition $\oplus$, and $0\otimes k=k\otimes 0=0$
for every $k\in K$. If the monoid $(K,\otimes,1)$ is commutative, then the
semiring is called commutative.\ The semiring is denoted simply by $K$ if the
operations and the constant elements are understood. The result of the empty
product as usual equals to $1$. The semiring $K$ is called (additively) idempotent if
$k\oplus k=k$ for every $k\in K$. The following algebraic structures are well-known semirings. The semiring $(\mathbb{N},+,\cdot,0,1)$ of natural numbers, the Boolean semiring $B=(\{0,1\},+,\cdot,0,1)$,  
the tropical or min-plus semiring $\mathbb{R}_{\min}=(\mathbb{R}_{+}\cup\{\infty\},\min,+,\infty,0)$ where $\mathbb{R}_{+}=\{r\in\mathbb{R}\mid r\geq0\}$, the arctical or max-plus semiring $\mathbb{R}_{\max}=(\mathbb{R}_{+}\cup\{-\infty\},\max,+,-\infty,0)$, the Viterbi semiring $(\left[  0,1\right]  ,\max,\cdot,0,1)$ used in probability theory, and every bounded distributive lattice with the operations $\sup$ and $\inf$.

Let $Q$ be a set. A \emph{formal series} (or simply \emph{series}) \emph{over}
$Q$ \emph{and} $K$ is a mapping $s:Q\rightarrow K$. The \emph{support of} $s$ is the set $\mathrm{supp}(s)=\{q \in Q \mid s(q) \neq 0  \}$. A series with finite support is also called a \emph{polynomial}. We denote by  $K\left\langle  Q
\right\rangle $ the class of all polynomials over $Q$ and $K$.  
Let $s,r\in K \left\langle Q \right\rangle $ and
$k\in K$. The \emph{sum} $s\oplus r$, the \emph{products with scalars} $ks$
and $sk$, and the\emph{ Hadamard product} $s\otimes r$\ are defined
elementwise, respectively  by $ s\oplus r(v)=s(v)\oplus r(v), ks(v)=k\otimes s(v), sk(v)=s(v)\otimes k, s\otimes r(v)=s(v)\otimes r(v)$  for every
$v\in Q$. \hide{It is a folklore result that the structure $\left(  K\left\langle
\left\langle Q\right\rangle \right\rangle ,\oplus,\otimes,\widetilde
{0},\widetilde{1}\right)$  is a semiring. Moreover, if $K$ is commutative (resp. idempotent),
then $\left(  K\left\langle \left\langle Q\right\rangle \right\rangle
,\oplus,\otimes,\widetilde{0},\widetilde{1}\right)  $ is also commutative (resp. idempotent).}
\begin{quote}
\emph{Throughout the paper }$(K, \oplus, \otimes, 0,1)$\emph{ will denote a commutative semiring.}
\end{quote}

\section{Weighted propositional interaction logic}

In this section, we introduce the weighted propositional interaction logic. For this, we
need to recall the propositional interaction logic \cite{Ma:Co} first.

Let $P$ be a nonempty finite set of \emph{ports}. We let $I(P)=\mathcal{P}%
(P)\setminus\{\emptyset\}$, where $\mathcal{P}(P)$ denotes the power set of $P$. Every set $a\in I(P)$ is called an
\emph{interaction}\label{interaction}. The syntax of \emph{propositional interaction logic}
(PIL for short) formulas over $P$\ is given by the grammar
\[
\phi::=\mathrm{true}\mid p\mid\overline{\phi}\mid\phi\vee\phi
\]
where $p\in P$. As usual, we set $\overline{\overline{\phi}}=\phi$ for every
PIL formula $\phi$ and $\mathrm{false}=\overline{\mathrm{true}}$. Then, the conjunction
of two PIL formulas $\phi,\phi^{\prime}$ is defined by $\phi\wedge
\phi^{\prime}=\overline{\left(  \overline{\phi}\vee\overline{\phi^{\prime}%
}\right)  }$. A PIL formula of the form $p_{1}\wedge\ldots\wedge p_{n}$
with $n>0$, and $p_{i}\in P$ or $p_i=\overline{p'_{i}}$ with $p'_i\in P$ for every $1\leq i\leq
n$, is called a \emph{monomial}. We shall simply denote a monomial
$p_{1}\wedge\ldots\wedge p_{n}$ by $p_{1}\ldots p_{n}$.

Let $\phi$ be a PIL formula and $a$ an interaction. We define the satisfaction relation 
$a\models_{i}\phi$ by induction on the structure of $\phi$ as follows:
\begin{itemize}
\item[-] $a\models_{i} \mathrm{true}$,

\item[-] $a\models_{i} p$  \ \ iff \ \ $p \in a$,

\item[-] $a\models_{i} \overline{\phi}$ \ \ iff \ \ $a \not \models_i \phi$,

\item[-] $a\models_{i} \phi_1 \vee \phi_2$ \ \ iff \ \ $a \models_i \phi_1$ or $a \models_i \phi_2$.
\end{itemize}

\noindent It should be clear that
$a\not \models _{i} \mathrm{false}$ for every $a\in I(P)$. For every interaction $a$ we
define its characteristic monomial $m_{a}%
=\bigwedge\nolimits_{p\in a}p\wedge\bigwedge\nolimits_{p\notin a}\overline{p}%
$. Then, for every $a^{\prime} \in I(P)$ we trivially get $a^{\prime
}\models_{i}m_{a}$ iff $a'= a$.

\begin{quote}
\emph{Throughout the paper }$P$\emph{ will denote a nonempty finite set of
ports.}
\end{quote}

\begin{definition}
\label{wPIL_syn}The syntax of formulas of the \emph{weighted PIL}
 (\emph{wPIL} for short) over $P$ and
$K$ is given by the grammar
\[
\varphi::=k\mid\phi\mid\varphi\oplus\varphi\mid\varphi\otimes\varphi
\]
where $k\in K$ and $\phi$ denotes a \emph{PIL} formula over $P$.
\end{definition}

We denote by $PIL(K,P)$ the set of all wPIL formulas $\varphi$
over $P$ and $K$. Next, we represent the semantics of formulas $\varphi\in
PIL(K,P)$ as polynomials $\left\Vert \varphi\right\Vert \in K
\left\langle I(P) \right\rangle $\footnote{Since $P$ is finite, the domain of $\|\varphi\|$ is finite and in turn its support is also finite.}. For the semantics of
PIL formulas $\phi$ we use the satisfaction relation as defined above.
In this way, we ensure that the semantics of PIL formulas $\phi$ gets
only the values $0$ and $1$.

\begin{definition}
\label{wPIL_sem}Let $\varphi\in PIL(K,P)$. The semantics of $\varphi$ is a
polynomial $\left\Vert \varphi\right\Vert \in K\left\langle
I(P)\right\rangle $ where for every $a\in I(P)$ the value $\left\Vert
\varphi\right\Vert (a)$ is defined inductively on the structure of $\varphi$ as follows:
\begin{itemize}
\item[-] $\left\Vert k\right\Vert (a)=k,$

\item[-] $\left\Vert \phi\right\Vert (a)=\left\{
\begin{array}
[c]{rl}%
1 & \textnormal{ if }a\models_{i}\phi\\
0 & \textnormal{ otherwise}%
\end{array}
,\right.  $

\item[-] $\left\Vert \varphi\oplus\psi\right\Vert (a)=\left\Vert
\varphi\right\Vert (a)\oplus\left\Vert \psi\right\Vert (a),$

\item[-] $\left\Vert \varphi\otimes\psi\right\Vert (a)=\left\Vert
\varphi\right\Vert (a)\otimes\left\Vert \psi\right\Vert (a).$
\end{itemize}
\end{definition}

The reader should note that the semantics of the wPIL formulas
$\phi\vee\phi$ and $\phi\oplus\phi$, where $\phi$ is a PIL formula, are
different. Indeed assume that $a\in I(P)$ is such that $a\models_{i}\phi$.
Then, by our definition above, we get $\left\Vert \phi\vee\phi\right\Vert
(a)=1$ whereas $\left\Vert \phi\oplus\phi\right\Vert (a)=1\oplus 1$.
Next we present an example of a wPIL formula.
\begin{example}
\label{m/s_ex}We recall from \cite{Ma:Co} the Master/Slave architecture
for two masters $M_{1},M_{2}$ and two slaves $S_{1},S_{2}$ with ports
$p_{1},p_{2}$ and $q_{1},q_{2}$, respectively (Figure \ref{MS_fig}). The monomial
$
\phi_{i,j}=m_{\{q_{i}, p_{j}\}} 
$
for every $1\leq i,j\leq 2$ defines the binary interaction between the ports $q_{i}$
and $p_{j}$.
For every $1\leq i,j\leq 2$ we consider a value $k_{i,j} \in K$ and the \emph{wPIL} formula $\varphi
_{i,j}=k_{i,j}\otimes\phi_{i,j}$. Hence, $k_{i,j}$\ can
be considered, according to the underlying semiring, as the ``cost" for the
implementation of the interaction $\phi_{i,j}$. For instance if $K$ is the
Viterbi semiring, then the value $k_{i,j}\in\lbrack0,1]$\ represents the
probability of the implementation of the interaction between the ports $q_{i}$
and $p_{j}$.

	\begin{figure}
		\begin{center}
				\begin{tikzpicture}
				\draw  (0,0) rectangle  (1,1);
				\draw  (1.7,0) rectangle  (2.7,1) ;
				\draw  (4,0) rectangle  (5,1) ;
				\draw  (5.7,0) rectangle  (6.7,1) ;
				\draw  (8,0) rectangle  (9,1) ;
				\draw  (9.7,0) rectangle  (10.7,1) ;
				\draw  (12,0) rectangle  (13,1) ;
				\draw  (13.7,0) rectangle  (14.7,1) ;
				
				\node at (0.5,0.7) {\small $M_1$};
				\node at (2.2,0.7) {\small $M_2$};
				\node at (4.5,0.7) {\small $M_1$};
				\node at (6.2,0.7) {\small $M_2$};
				\node at (8.5,0.7) {\small $M_1$};
				\node at (10.2,0.7) {\small $M_2$};
				\node at (12.5,0.7) {\small $M_1$};
				\node at (14.2,0.7) {\small $M_2$};
				
				\draw  (0.3,0) rectangle  (0.65,0.3);
				\draw  (2,0) rectangle  (2.35,0.3);
				\draw  (4.3,0) rectangle  (4.65,0.3);
				\draw  (6,0) rectangle  (6.36,0.3);
				\draw  (8.3,0) rectangle  (8.65,0.3);
				\draw  (10,0) rectangle  (10.35,0.3);
				\draw  (12.3,0) rectangle  (12.65,0.3);
				\draw  (14,0) rectangle  (14.35,0.3);
				
				\node at (0.5,0.15) {\tiny $p_1$};
				\node at (2.2,0.15) {\tiny $p_2$};
				\node at (4.5,0.15) {\tiny $p_1$};
				\node at (6.2,0.15) {\tiny $p_2$};
				\node at (8.5,0.15) {\tiny $p_1$};
				\node at (10.2,0.15) {\tiny $p_2$};
				\node at (12.5,0.15) {\tiny $p_1$};
				\node at (14.2,0.15) {\tiny $p_2$};

				\draw  (0,-3) rectangle  (1,-2);
				\draw  (1.7,-3) rectangle  (2.7,-2) ;
				\draw  (4,-3) rectangle  (5,-2) ;
				\draw  (5.7,-3) rectangle  (6.7,-2) ;
				\draw  (8,-3) rectangle  (9,-2) ;
				\draw  (9.7,-3) rectangle  (10.7,-2) ;
				\draw  (12,-3) rectangle  (13,-2) ;
				\draw  (13.7,-3) rectangle  (14.7,-2) ;
				
				\node at (0.5,-2.8) {\small $S_1$};
				\node at (2.2,-2.8) {\small $S_2$};
				\node at (4.5,-2.8) {\small $S_1$};
				\node at (6.2,-2.8) {\small $S_2$};
				\node at (8.5,-2.8) {\small $S_1$};
				\node at (10.2,-2.8) {\small $S_2$};
				\node at (12.5,-2.8) {\small $S_1$};
				\node at (14.2,-2.8) {\small $S_2$};
				
				\draw  (0.3,-2) rectangle  (0.65,-2.3);
				\draw  (2,-2) rectangle  (2.35,-2.3);
				\draw  (4.3,-2) rectangle  (4.65,-2.3);
				\draw  (6,-2) rectangle  (6.36,-2.3);
				\draw  (8.3,-2) rectangle  (8.65,-2.3);
				\draw  (10,-2) rectangle  (10.35,-2.3);
				\draw  (12.3,-2) rectangle  (12.65,-2.3);
				\draw  (14,-2) rectangle  (14.35,-2.3);
				
				\node at (0.5,-2.2) {\tiny $q_1$};
				\node at (2.2,-2.2) {\tiny $q_2$};
				\node at (4.5,-2.2) {\tiny $q_1$};
				\node at (6.2,-2.2) {\tiny $q_2$};
				\node at (8.5,-2.2) {\tiny $q_1$};
				\node at (10.2,-2.2) {\tiny $q_2$};
				\node at (12.5,-2.2) {\tiny $q_1$};
				\node at (14.2,-2.2) {\tiny $q_2$};
				
				\draw[fill] (0.5,0) circle [radius=1.5pt];
				\draw[fill] (2.2,0) circle [radius=1.5pt];
				\draw[fill] (4.5,0) circle [radius=1.5pt];
				\draw[fill] (6.2,0) circle [radius=1.5pt];
				\draw[fill] (8.5,0) circle [radius=1.5pt];
				\draw[fill] (10.2,0) circle [radius=1.5pt];
				\draw[fill] (12.5,0) circle [radius=1.5pt];
				\draw[fill] (14.2,0) circle [radius=1.5pt];

				\draw[fill] (0.5,-2) circle [radius=1.5pt];
				\draw[fill] (2.2,-2) circle [radius=1.5pt];
				\draw[fill] (4.5,-2) circle [radius=1.5pt];
				\draw[fill] (6.2,-2) circle [radius=1.5pt];
				\draw[fill] (8.5,-2) circle [radius=1.5pt];
				\draw[fill] (10.2,-2) circle [radius=1.5pt];
				\draw[fill] (12.5,-2) circle [radius=1.5pt];
				\draw[fill] (14.2,-2) circle [radius=1.5pt];
				
				\draw  (0.5,0)--(0.5,-2);
				\node at (0.1,-0.8) { $k_{1,1}$};
				\draw  (2.2,0)--(2.2,-2);
				\node at (1.8,-0.8) { $k_{2,2}$};
				
				\draw  (4.5,0)--(4.5,-2);
				\node at (4.1,-0.8) { $k_{1,1}$};
				\draw  (4.5,0)--(6.2,-2);
				\node at (5.6,-0.8) { $k_{2,1}$};

				\draw  (8.5,0)--(10.2,-2);
				\node at (8.7,-0.8) { $k_{2,1}$};
				\draw  (10.2,0)--(8.5,-2);
				\node at (10.1,-0.8) { $k_{1,2}$};
				
				\draw  (14.2,0)--(12.5,-2);
				\node at (13,-0.8) { $k_{1,2}$};
				\draw  (14.2,0)--(14.2,-2);
				\node at (14.6,-0.8) { $k_{2,2}$};
			\end{tikzpicture}
		\end{center}
		\caption{Weighted Master/Slave architecture.}
		\label{MS_fig}
	\end{figure}
\end{example}

\section{Weighted propositional configuration logic}

In this section, we introduce and investigate the weighted propositional configuration
logic. Firstly, we recall the configuration logic from \cite{Ma:Co}. More
precisely, the syntax of \emph{propositional configuration logic }(PCL for short) formulas over $P$\ is given by the grammar%
\[
f::=\mathrm{true}\mid\phi\mid\lnot f\mid f\sqcup f\mid f+f
\]
where $\phi$ denotes a PIL formula over $P$. The operators $\lnot$, $\sqcup$, and
$+$ are called \emph{complementation, union, }and \emph{coalescing},
respectively. We define also the \emph{intersection} $\sqcap$ and
\emph{implication} $\implies$ operators, respectively as follows:

- $f_{1}\sqcap f_{2}:=\lnot(\lnot f_{1}\sqcup\lnot f_{2}),$

- $f_{1}\implies f_{2}:=\lnot f_{1}\sqcup f_{2}.$

\noindent To avoid any confusion, every PCL formula which is a
PIL formula will be called an \emph{interaction formula.} We let
$C(P)=\mathcal{P}(I(P))\setminus\{\emptyset\}$. For every PCL formula
$f$ and $\gamma\in C(P)$ we define the satisfaction relation $\gamma\models
f$\ inductively on the structure of $f$ as follows:

\begin{itemize}
\item[-] $\gamma\models \mathrm{true},$

\item[-] $\gamma\models\phi$ \ \ iff \ $\ a\models_{i}\phi$ for every
$a\in\gamma,$

\item[-] $\gamma\models\lnot f$ \ \ iff \ \ $\gamma\not \models f,$

\item[-] $\gamma\models f_{1}\sqcup f_{2}$ \ \ iff \ \ $\gamma\models f_{1}$
or $\gamma\models f_{2},$

\item[-] $\gamma\models f_{1}+f_{2}$ \ \ iff \ \ there exist $\gamma
_{1},\gamma_{2}\in C(P)$ such that $\gamma=\gamma_{1}\cup\gamma_{2}$, and
$\gamma_{1}\models f_{1}$ and $\gamma_{2}\models f_{2}.$
\end{itemize}

\noindent Trivially, we get

- $\gamma\models f_{1}\sqcap f_{2}$ \ \ iff \ \ $\gamma\models f_{1}$
and $\gamma\models f_{2},$ and

- $\gamma\models f_{1}\implies f_{2}$ \ \ iff \ \ $\gamma
\not \models f_{1}$ or $\gamma\models f_{2}.$

\noindent We define the \emph{closure} ${\sim} f$ of every
PCL formula $f$ by

- ${\sim} f:=f+\mathrm{true}$,

\noindent and the \emph{disjunction} $f_{1}\vee f_{2}$ of two PCL
formulas $f_{1}$ and $f_{2}$ by

- $f_{1}\vee f_{2}:=f_{1}\sqcup f_{2}\sqcup (f_{1}+f_{2})$.

\noindent Two PCL formulas $f,f^{\prime}$ are called \emph{equivalent}, and we
denote it by $f\equiv f^{\prime},$\ whenever $\gamma\models f$\ iff
$\gamma\models f^{\prime}$\ for every $\gamma\in C(P)$. A PCL formula $f$ is called
\emph{downward-closed }if $\gamma\models f$ implies $\gamma^{\prime
}\models f$ for every $\gamma^{\prime}\subseteq\gamma$, and $f$ is called $\cup$\emph{-closed} if $\gamma\models f$ and $\gamma^{\prime}\models f$
implies $\gamma\cup\gamma^{\prime}\models f$.

In the subsequent Propositions \ref{char_inter}-\ref{col_prop}
we recall from \cite{Ma:Co}  properties of PCL  formulas that will be needed in the sequel. We refer the reader to \cite{Ma:Co} for further properties of PCL formulas. 

\begin{proposition}
\label{char_inter}A \emph{PCL} formula is $\cup$-closed and downward-closed iff it is
an interaction formula.
\end{proposition}

\begin{proposition}
\label{conj-inter}Let $\phi,\phi^{\prime}$ be interaction formulas. Then
\begin{itemize}
\item[i)] $\phi+\phi\equiv\phi$.
\item[ii)] $\phi\wedge\phi^{\prime}\equiv\phi\sqcap\phi^{\prime}$.
\end{itemize}
\end{proposition}

\noindent Due to Proposition \ref{conj-inter}(ii) we denote, in the sequel, both conjunction
and intersection operations of PCL formulas with the same symbol $\wedge$.

\begin{proposition}
\begin{itemize}
\item[(i)] \label{col_prop} The operators $\sqcup$, $\lnot$, $\wedge$ satisfy
the usual axioms of propositional logic.

\item[(ii)] The coalescing operation is associative, commutative, and has
$\mathrm{false}$ as an absorbing element.
\end{itemize}
\end{proposition}

\hide{
A PCL formula $f$ is called

\begin{itemize}
\item \emph{downward-closed }if $\gamma\models f$ implies $\gamma^{\prime
}\models f$ for every $\gamma^{\prime}\subseteq\gamma,$

\item \emph{upward-closed }if $\gamma\models f$ implies $\gamma^{\prime
}\models f$ for every $\gamma\subseteq\gamma^{\prime},$

\item $\cup$\emph{-closed} if $\gamma\models f$ and $\gamma^{\prime}\models f$
implies $\gamma\cup\gamma^{\prime}\models f$.
\end{itemize}

In the subsequent Propositions \ref{conj-inter}-\ref{PCL-prop},
\ref{PCL-prop_cl}, and Corollary \ref{PIL_coal_idemp} we summarize, for the
reader's convenience, the main properties of PCL  formulas (cf.
\cite{Ma:Co}).

\begin{proposition}
\label{conj-inter}Let $\phi,\phi^{\prime}$ be interaction formulas. Then%
\[
\phi\wedge\phi^{\prime}\equiv\phi\sqcap\phi^{\prime}.
\]

\end{proposition}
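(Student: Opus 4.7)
The plan is to verify the equivalence by unfolding both sides of the claimed identity against the semantics of the PCL operators on an arbitrary $\gamma\in C(P)$, using the fact that $\phi\wedge\phi'$ is itself an interaction formula (since PIL is closed under $\wedge$, which is a derived connective from $\vee$ and $\overline{\phantom{x}}$). There is no induction on formula structure needed; everything collapses to checking the satisfaction relation pointwise on the interactions in $\gamma$.

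First, I would handle the left-hand side. Since $\phi\wedge\phi'$ is an interaction formula, the PCL semantics gives $\gamma\models\phi\wedge\phi'$ iff $a\models_i\phi\wedge\phi'$ for every $a\in\gamma$. By the definition of $\wedge$ in PIL together with the truth-table interpretation of $\models_i$, this is in turn equivalent to: for every $a\in\gamma$, $a\models_i\phi$ and $a\models_i\phi'$.

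Next I would handle the right-hand side. By definition $\phi\sqcap\phi'=\lnot(\lnot\phi\sqcup\lnot\phi')$, and the derived rule listed in the excerpt states $\gamma\models\phi\sqcap\phi'$ iff $\gamma\models\phi$ and $\gamma\models\phi'$. Since $\phi$ and $\phi'$ are interaction formulas, applying the PCL clause for interaction formulas to each conjunct gives: for every $a\in\gamma$, $a\models_i\phi$, and for every $a\in\gamma$, $a\models_i\phi'$. This is clearly the same condition as the one obtained for the left-hand side, so $\gamma\models\phi\wedge\phi'$ iff $\gamma\models\phi\sqcap\phi'$, proving $\phi\wedge\phi'\equiv\phi\sqcap\phi'$.

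There is really no main obstacle here; the only point requiring a small amount of care is to recognize that $\phi\wedge\phi'$, although defined via $\vee$ and complementation in PIL, must be interpreted under the PCL semantics as an \emph{interaction formula} (using the universally-quantified clause over $a\in\gamma$) rather than being expanded into a PCL expression via $\lnot$ and $\sqcup$. Once this is flagged, the argument is a one-line unfolding, and it could equally well be invoked later to move freely between PIL conjunctions and PCL intersections of interaction formulas.
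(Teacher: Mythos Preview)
Your proof is correct; it is the natural direct verification by unfolding the PCL semantics on an arbitrary $\gamma\in C(P)$. The paper does not supply its own proof of this proposition but merely cites it from \cite{Ma:Co}, so there is nothing to compare against beyond noting that your argument is exactly the expected one.
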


Due to Proposition \ref{conj-inter} we denote, in the sequel, both conjunction
and intersection operations of PCL formulas with the same symbol $\wedge$.

\begin{proposition}
\label{char_inter}A \emph{PCL} formula is $\cup$-closed and downward-closed iff it is
an interaction formula.
\end{proposition}

\begin{proposition}
\begin{itemize}
\item[(i)] \label{col_prop} The operators $\sqcup$, $\lnot$, $\wedge$ satisfy
the usual axioms of propositional logic.

\item[(ii)] The coalescing operation is associative, commutative, and has
$false$ as an absorbing element.
\end{itemize}
\end{proposition}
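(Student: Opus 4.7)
The plan is to reduce both parts to routine unfolding of the satisfaction relation, exploiting the fact that the PCL Boolean connectives are defined by the corresponding meta-level Boolean connectives applied to $\gamma \models \cdot$, and that the coalescing operator is defined via set union.

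For part (i), I would fix an arbitrary $\gamma \in C(P)$ and simply observe, from the clauses of the satisfaction relation, that $\gamma \models f_1 \sqcup f_2$ is the disjunction of the propositions $\gamma \models f_1$ and $\gamma \models f_2$, that $\gamma \models \lnot f$ is their negation, and by the standard calculation already stated before the proposition, $\gamma \models f_1 \wedge f_2$ is their conjunction. Consequently, any classical propositional tautology (commutativity and associativity of $\sqcup$ and $\wedge$, distributivity, the De Morgan laws, double negation, excluded middle, etc.), viewed as an identity between PCL formulas built from $\sqcup$, $\wedge$, $\lnot$, becomes, after applying the semantics at an arbitrary $\gamma$, the same tautology over the two-element Boolean algebra; thus both sides have the same truth value at every $\gamma$, so the formulas are $\equiv$-equivalent. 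This is a uniform argument rather than a case analysis on axioms.

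For part (ii), I would again unfold the coalescing clause directly. For associativity, $\gamma \models (f_1 + f_2) + f_3$ unfolds, via two applications of the defining clause, to the existence of $\gamma_1, \gamma_2, \gamma_3 \in C(P)$ with $\gamma = \gamma_1 \cup \gamma_2 \cup \gamma_3$ and $\gamma_i \models f_i$; the same unfolding applies to $f_1 + (f_2 + f_3)$, and the two conditions coincide because union of subsets of $I(P)$ is associative. One small point to watch is that the intermediate set $\gamma_1 \cup \gamma_2$ (respectively $\gamma_2 \cup \gamma_3$) lies in $C(P)$, i.e.\ is nonempty, which holds because each $\gamma_i$ is nonempty. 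Commutativity is immediate from commutativity of union.

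For absorption of $false$, the key observation is that no $\gamma \in C(P)$ satisfies the interaction formula $false$: by the PCL clause for interaction formulas, $\gamma \models false$ requires $a \models_i false$ for every $a \in \gamma$, but $\gamma$ is nonempty and no interaction satisfies $false$. Hence $\gamma \models f + false$ would force a decomposition $\gamma = \gamma_1 \cup \gamma_2$ with $\gamma_2 \models false$, which is impossible, so $\gamma \not\models f + false$; since also $\gamma \not\models false$ for every $\gamma \in C(P)$, we conclude $f + false \equiv false$. The only mildly subtle point — and the single place where the argument is not a pure symbol manipulation — is this non\-emptiness check, which uses the specific definition $C(P) = \mathcal{P}(I(P)) \setminus \{\emptyset\}$.
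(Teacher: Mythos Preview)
Your proposal is correct. The paper does not actually prove this proposition: it is listed among the results that ``summarize, for the reader's convenience, the main properties of \emph{PCL} formulas'' taken from \cite{Ma:Co}, so there is no proof in the paper to compare against. Your argument is the standard direct verification and is exactly what one would expect; the only remark is that in part~(ii) the nonemptiness check you flag is indeed the one genuine content, and you have handled it correctly.
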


\begin{proposition}
\label{PCL-prop}Let $\phi$ be an interaction formula and $f_{1},f_{2},f_{3}$
PCL formulas. Then the following statements hold true.

\begin{itemize}
\item[(i)] $\phi\wedge(f_{1}+f_{2})\equiv(\phi\wedge f_{1})+(\phi\wedge
f_{2}).$

\item[(ii)] $f_{1}+(f_{2}\sqcup f_{3})\equiv(f_{1}+f_{2})\sqcup(f_{1}+f_{3}).$

\item[(iii)] $f_{1}\vee(f_{2}\sqcup f_{3})\equiv(f_{1}\vee f_{2})\sqcup
(f_{1}\vee f_{3}).$

\item[(iv)] $f_{1}+(f_{2}\wedge f_{3})\implies(f_{1}+f_{2})\wedge(f_{1}%
+f_{3}).$
\end{itemize}

\noindent If in addition $f_{1}$ is $\cup$-closed, then

\begin{itemize}
\item[(v)] $f_{1}+(f_{2}\vee f_{3})\equiv(f_{1}+f_{2})\vee(f_{1}+f_{3}),$

\item[(vi)] $f_{1}+f_{1}\equiv f_{1}.$
\end{itemize}
\end{proposition}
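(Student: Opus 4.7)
The proof falls into six short pieces, and my plan is to handle them in the order (i), (ii), (iii), (iv), (vi), (v), because (v) relies on (vi) and (iii) relies on (ii). Each part reduces to unfolding the satisfaction relation and exploiting the structural property of the distinguished formula (interaction-ness in (i), $\cup$-closure in (v)--(vi)).

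For (i), I would show both directions on an arbitrary $\gamma\in C(P)$. The forward direction uses that $\phi$ is downward-closed (guaranteed by Proposition \ref{char_inter}, since interaction formulas are $\cup$-closed and downward-closed): from $\gamma=\gamma_1\cup\gamma_2$ with $\gamma_i\models f_i$ and $\gamma\models\phi$ one gets $\gamma_i\models\phi$, hence $\gamma_i\models\phi\wedge f_i$. The backward direction instead uses that $\phi$ is $\cup$-closed: combining $\gamma_1,\gamma_2\models\phi$ yields $\gamma_1\cup\gamma_2\models\phi$. For (ii), the equivalence is immediate: $\gamma=\gamma_1\cup\gamma_2$ with $\gamma_1\models f_1$ and $\gamma_2\models f_2\sqcup f_3$ splits by cases on which disjunct $\gamma_2$ satisfies, recovering $\gamma\models(f_1+f_2)\sqcup(f_1+f_3)$, and the converse is a direct repackaging. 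For (iii), I would just expand $f\vee g:=f\sqcup g\sqcup(f+g)$ on both sides, apply (ii) to distribute $f_1+(\cdot)$ across $\sqcup$, and verify that the two resulting $\sqcup$-expressions agree as multisets of disjuncts.

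Part (iv) is only an implication, and the reason is that the required decomposition $\gamma=\gamma_1\cup\gamma_2$ is common: if $\gamma_2\models f_2\wedge f_3$, the same pair $(\gamma_1,\gamma_2)$ witnesses both $\gamma\models f_1+f_2$ and $\gamma\models f_1+f_3$, hence $\gamma\models(f_1+f_2)\wedge(f_1+f_3)$. The converse fails in general because the witnessing decompositions for $f_1+f_2$ and $f_1+f_3$ need not coincide, so I will just note this. For (vi), the hypothesis of $\cup$-closure is used directly: $\gamma\models f_1+f_1$ gives $\gamma=\gamma_1\cup\gamma_2$ with $\gamma_i\models f_1$, and $\cup$-closure yields $\gamma\models f_1$; conversely the trivial decomposition $\gamma=\gamma\cup\gamma$ gives $\gamma\models f_1+f_1$ whenever $\gamma\models f_1$.

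Finally, for (v) I would expand both sides using the definition of $\vee$ and (ii), together with associativity and commutativity of $+$ from Proposition \ref{col_prop}(ii). The left side becomes $(f_1+f_2)\sqcup(f_1+f_3)\sqcup(f_1+f_2+f_3)$, while the right becomes $(f_1+f_2)\sqcup(f_1+f_3)\sqcup((f_1+f_2)+(f_1+f_3))$. The last disjunct on the right rearranges to $(f_1+f_1)+f_2+f_3$, and here I invoke (vi), whose applicability is exactly the $\cup$-closure hypothesis, to collapse $f_1+f_1\equiv f_1$. The main potential pitfall, and the step I expect to require the most care, is precisely this chain in (v): one must first prove (vi) independently of (v), and must take the absorbing element $false$ and the associativity/commutativity properties from Proposition \ref{col_prop}(ii) as already established, so that the syntactic rearrangement inside the last disjunct is legitimate under $\equiv$.
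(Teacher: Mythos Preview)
Your proof is correct in every part. The paper, however, does not give its own proof of this proposition: it is one of the results listed (together with Propositions \ref{conj-inter}--\ref{col_prop}, \ref{PCL-prop_cl} and Corollary \ref{PIL_coal_idemp}) merely ``for the reader's convenience'' and attributed to \cite{Ma:Co}, so there is nothing in the present paper to compare against. What you have written is exactly the kind of direct semantic verification one would expect; the only minor remark is that in part (iii) you also silently use idempotency of $\sqcup$ (to identify the two copies of $f_1$ on the right-hand side), which is available from Proposition \ref{col_prop}(i) and could be mentioned alongside your use of (ii).
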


By Propositions \ref{char_inter} and \ref{PCL-prop}(vi), we immediately obtain
the next corollary.

\begin{corollary}
\label{PIL_coal_idemp}Let $\phi$ be a \emph{PIL} formula. Then $\phi+\phi\equiv\phi$.
\end{corollary}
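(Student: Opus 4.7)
The plan is to apply the two cited results in sequence, with essentially no additional work. First, I would invoke Proposition \ref{char_inter}: since $\phi$ is a PIL formula, it is (viewed as a PCL formula) an interaction formula, and hence by that proposition it is both $\cup$-closed and downward-closed. In particular, $\phi$ is $\cup$-closed.

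Next, I would appeal to Proposition \ref{PCL-prop}(vi), which states that for any $\cup$-closed PCL formula $f_1$ one has $f_1 + f_1 \equiv f_1$. Instantiating $f_1 := \phi$ immediately yields $\phi + \phi \equiv \phi$, which is the claim.

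There is no genuine obstacle here; the only point worth being careful about is the implicit identification of a PIL formula with a PCL formula (so that the PCL-level notion of $\cup$-closedness and the PCL coalescing operator $+$ are applicable to $\phi$ at all). This identification is the one already used in the syntax of PCL, where interaction formulas appear as a base case of the grammar, so nothing new needs to be justified. Thus the corollary is proved as a one-line consequence of the two preceding results.
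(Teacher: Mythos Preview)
Your proposal is correct and matches the paper's own argument exactly: the paper states that the corollary follows immediately from Propositions \ref{char_inter} and \ref{PCL-prop}(vi), which is precisely the chain of reasoning you describe.
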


\begin{proposition}
\label{PCL-prop_cl}Let $f,f_{1},f_{2}$ be \emph{PCL} formulas. Then the following
statements hold true.

\begin{itemize}
\item[(i)] ${\sim}{\sim} f\equiv{\sim} f.$

\item[(ii)] $f\Rightarrow{\sim} f.$

\item[(iii)] $\lnot{\sim}\lnot f\Rightarrow f.$

\item[(iv)] ${\sim}\left(  f_{1}\sqcup f_{2}\right)  \equiv{\sim} f_{1}\sqcup{\sim}
f_{2}\equiv{\sim}\left(  f_{1}\vee f_{2}\right)  .$

\item[(v)] ${\sim} f_{1}+{\sim} f_{2}\equiv{\sim}\left(  f_{1}+f_{2}\right)
\equiv{\sim} f_{1}\wedge{\sim} f_{2}.$
\end{itemize}
\end{proposition}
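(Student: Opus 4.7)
The plan is to prove each of the five statements by combining associativity/commutativity of $+$ and the absorption $true + true \equiv true$ (which follows from Corollary \ref{PIL_coal_idemp}, since $true$ is a PIL formula), with the distributivity laws already established in Proposition \ref{PCL-prop} and direct semantic arguments on $C(P)$.

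For (i), I would rewrite $\sim \sim f = (f + true) + true$, apply associativity of coalescing from Proposition \ref{col_prop}(ii), and collapse $true + true$ to $true$ via Corollary \ref{PIL_coal_idemp}. For (ii), the cleanest route is semantic: if $\gamma \models f$, take the witnesses $\gamma_1 = \gamma_2 = \gamma$ in the definition of the coalescing clause for $f + true$, noting that $\gamma \models true$ always. Statement (iii) then falls out as the contrapositive of (ii) applied to $\lnot f$, using $\lnot \lnot f \equiv f$ from Proposition \ref{col_prop}(i).

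For (iv), the first equivalence $\sim(f_1 \sqcup f_2) \equiv \sim f_1 \sqcup \sim f_2$ is a direct application of Proposition \ref{PCL-prop}(ii) with $f_1 \sqcup f_2$ coalesced with $true$. The second equivalence requires unfolding $f_1 \vee f_2 = f_1 \sqcup f_2 \sqcup (f_1 + f_2)$, distributing $+\,true$ over $\sqcup$, and then showing that the extra disjunct $\sim(f_1 + f_2)$ is semantically absorbed by $\sim f_1$ (indeed, $\gamma \models \sim(f_1 + f_2)$ means $\gamma = \gamma' \cup \gamma''$ with $\gamma' = \gamma_1' \cup \gamma_2'$, $\gamma_i' \models f_i$, and taking $\gamma_1'$ against $\gamma_2' \cup \gamma''$ witnesses $\gamma \models \sim f_1$).

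For (v), the first equivalence $\sim f_1 + \sim f_2 \equiv \sim(f_1 + f_2)$ is purely equational: expand both sides, use associativity and commutativity of coalescing, and collapse $true + true$ to $true$. The second equivalence $\sim(f_1 + f_2) \equiv \sim f_1 \wedge \sim f_2$ must be argued semantically in both directions. The forward direction reuses the absorption argument from (iv) applied symmetrically in each coordinate. The converse — if $\gamma \models \sim f_1$ and $\gamma \models \sim f_2$, then $\gamma \models \sim(f_1+f_2)$ — is the step requiring genuine care: from the two decompositions $\gamma = \gamma_1 \cup \gamma_1' = \gamma_2 \cup \gamma_2'$ with $\gamma_i \models f_i$, I would set $\gamma^* = \gamma_1 \cup \gamma_2 \models f_1 + f_2$ and observe that $\gamma = \gamma^* \cup \gamma$ exhibits $\gamma \models (f_1 + f_2) + true$. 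The main subtlety to watch is keeping all intermediate sets in $C(P)$ (nonempty), which is automatic here since each $\gamma_i$ satisfies a formula and hence lies in $C(P)$; this non-emptiness bookkeeping is the only nontrivial obstacle across the whole proposition.
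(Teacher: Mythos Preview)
Your proposal is correct. The paper itself does not prove Proposition~\ref{PCL-prop_cl}; it is listed among the properties ``summarized for the reader's convenience'' from \cite{Ma:Co}, and the later Proposition~\ref{lemma_FOCL} simply points back to \cite{Ma:Co} (Props.~4.24--4.27) for the argument. Your direct verification---equational manipulation via associativity/commutativity of $+$ together with $true+true\equiv true$ from Corollary~\ref{PIL_coal_idemp}, combined with short semantic arguments on $C(P)$---is exactly the kind of proof one would expect, and it fills in what the paper leaves as a citation. The only remark is that your non-emptiness bookkeeping is indeed the right thing to flag: in the converse direction of (v), the witness $\gamma=\gamma^*\cup\gamma$ works because $\gamma$ itself is always a legal second component, so no empty sets arise.
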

}

Next we introduce our weighted PCL.

\begin{definition}
\label{wPCL_syn}The syntax of formulas of the \emph{weighted PCL} (\emph{wPCL} for short) over $P$ and
$K$ is given by the grammar%
\[
\zeta::=k\mid f\mid\zeta\oplus\zeta\mid\zeta\otimes\zeta\mid\zeta\uplus\zeta
\]
where $k\in K$, $f$ denotes a \emph{PCL} formula over $P$, and $\uplus$ denotes the
\emph{coalescing} operator among \emph{wPCL} formulas.
\end{definition}

Again, as for PCL formulas, to avoid any confusion, every wPCL formula which is a wPIL formula will be called a
\emph{weighted interaction formula.} We denote by $PCL(K,P)$ the set of all
wPCL formulas over $P$ and $K$. We represent the semantics of
formulas $\zeta\in PCL(K,P)$ as polynomials $\left\Vert \zeta\right\Vert \in
K\left\langle C(P)\right\rangle $. For the
semantics of PCL formulas we use the satisfaction relation as defined previously.

\begin{definition}
\label{wPCL_sem}Let $\zeta\in PCL(K,P)$. The semantics of $\zeta$\ is a polynomial
$\left\Vert \zeta\right\Vert \in K\left\langle C(P)
\right\rangle $ where for every $\gamma\in C(P)$ the value $\left\Vert
\zeta\right\Vert (\gamma)$ is defined inductively as follows:

\begin{itemize}
\item[-] $\left\Vert k\right\Vert (\gamma)=k,$

\item[-] $\left\Vert f\right\Vert (\gamma)=\left\{
\begin{array}
[c]{rl}%
1 & \textnormal{ if }\gamma\models f\\
0 & \textnormal{ otherwise}%
\end{array}
,\right.  $

\item[-] $\left\Vert \zeta_{1}\oplus\zeta_{2}\right\Vert (\gamma)=\left\Vert
\zeta_{1}\right\Vert (\gamma)\oplus\left\Vert \zeta_{2}\right\Vert (\gamma),$

\item[-] $\left\Vert \zeta_{1}\otimes\zeta_{2}\right\Vert (\gamma)=\left\Vert
\zeta_{1}\right\Vert (\gamma)\otimes\left\Vert \zeta_{2}\right\Vert (\gamma),$

\item[-] $\left\Vert \zeta_{1}\uplus\zeta_{2}\right\Vert (\gamma
)=\bigoplus\nolimits_{\gamma=\gamma_{1}\cupdot\gamma_{2}}\left(  \left\Vert
\zeta_{1}\right\Vert (\gamma_{1})\otimes\left\Vert \zeta_{2}\right\Vert
(\gamma_{2})\right)  $
\end{itemize}
\noindent where $\cupdot$ denotes the disjoint union of the sets $\gamma_1$ and $\gamma_2$.
\end{definition}

\noindent Since the semantics of every wPCL formula is defined on $C(P)$,
the sets $\gamma_{1}$ and $\gamma_{2}$ in $\left\Vert \zeta_{1}\uplus\zeta
_{2}\right\Vert (\gamma)$\ are nonempty. Two wPCL formulas $\zeta_{1},\zeta_{2}$ are
called equivalent, and we write $\zeta_{1}\equiv\zeta_{2}$, whenever
$\left\Vert \zeta_{1}\right\Vert =\left\Vert \zeta_{2}\right\Vert $.
The \emph{closure }${\sim}\zeta$ of every wPCL formula $\zeta\in
PCL(K,P)$ is determined by:

- ${\sim}\zeta:=\zeta \oplus (\zeta\uplus 1)$.

\begin{lemma}\label{clos_form} Let $\zeta \in PCL(K,P)$. Then 
$$\left\Vert {\sim}\zeta\right\Vert (\gamma)=\bigoplus
\nolimits_{\gamma^{\prime}\subseteq\gamma}\left\Vert \zeta\right\Vert
(\gamma^{\prime})$$
for every $\gamma \in C(P)$.
\end{lemma}
\begin{proof}
We compute
\begin{align*}
\left\Vert{\sim}\zeta\right\Vert (\gamma) &= \left\Vert \zeta \oplus (\zeta\uplus 1)\right\Vert (\gamma) 
 =\left\Vert \zeta  \right\Vert(\gamma) \oplus \left\Vert\zeta\uplus 1 \right\Vert (\gamma) \\
 & = \left\Vert \zeta  \right\Vert (\gamma)\oplus \left( \bigoplus\nolimits_{\gamma=\gamma'\cupdot\gamma''}\left(  \left\Vert
\zeta\right\Vert (\gamma')\otimes\left\Vert 1\right\Vert
(\gamma'')\right)  \right) \\
& = \left\Vert \zeta  \right\Vert (\gamma)\oplus  \bigoplus\nolimits_{\gamma' \varsubsetneq \gamma} \left\Vert
\zeta\right\Vert (\gamma') \\
&= \bigoplus\nolimits_{\gamma' \subseteq \gamma} \left\Vert
\zeta\right\Vert (\gamma')
\end{align*}
for every $\gamma \in C(P)$, where the fourth equality holds since $\gamma'$ and $\gamma''$ are disjoint.
\end{proof}

\begin{remark}
The reader should notice the difference in the definition of semantics among coalescing in \emph{PCL}, taken from \cite{Ma:Co}, and  coalescing in \emph{wPCL}. Though $(\gamma_1, \gamma_2)$ is not required to be a partition of $\gamma$ for coalescing in \emph{PCL}, it is used like that in all considered software architectures \cite{Ma:Co}. More precisely, let us assume that $f_1+f_2$ is a subformula of a \emph{PCL} formula $f$ over $P$ representing an architecture. This means that $f_1$ describes the topology of a part of the architecture and $f_2$ the topology of another part of the architecture. The two parts may have common components but not the same connections (interactions). Hence, if $\gamma \models f_1+f_2$,   then a subset $\gamma_1 $ of $ \gamma$ satisfies  $f_1$ and its complement $\gamma_2$ (according to $\gamma$) satisfies $f_2$.  Therefore, our definition of coalescing in \emph{wPCL} does not decrease its usefulness to the description of architectures. On the other hand, our definition intents also to the truth of Lemma \ref{clos_form} which is important for the description of architectures with quantitative characteristics, as it is the corresponding result in \emph{PCL} (cf. Statement after Definition 4.22 in \cite{Ma:Co}).    
\end{remark}

\hide{
For every PCL formula $f$\ over $P$ and every wPCL
formula $\zeta\in PCL(K,P)$, we consider the macro:

- $f\implies\zeta:=\lnot f\oplus(f\otimes\zeta).$

\noindent Then for $\gamma\in C(P)$, we get $\left\Vert f\implies
\zeta\right\Vert (\gamma)=\left\Vert \zeta\right\Vert (\gamma)$ if
$\gamma\models f$, and $\left\Vert f\implies\zeta\right\Vert (\gamma)=1$ otherwise.}

\begin{example}
[Example \ref{m/s_ex} continued]\label{ex_m_s_w}The four possible
configurations of the Master/Slave architecture for two masters
$M_{1},M_{2}$ and two slaves $S_{1},S_{2}$ with ports $p_{1},p_{2}$ and
$q_{1},q_{2}$, respectively (Figure \ref{MS_fig}), are given by the \emph{PCL} formula $
\left(  \phi_{1,1}\sqcup\phi_{1,2}\right)  +\left(  \phi_{2,1}\sqcup\phi
_{2,2}\right)  .$ 
We consider the \emph{wPCL} formula%
\[
\zeta={\sim}\left(  \left(  \varphi_{1,1}\oplus\varphi_{1,2}\right)
\uplus\left(  \varphi_{2,1}\oplus\varphi_{2,2}\right)  \right)  .
\]
Then for $\gamma\in C\left(  \left\{  p_{1},p_{2},q_{1},q_{2}\right\}
\right)  $ we get
\begin{align*}
\left\Vert \zeta\right\Vert (\gamma)  &  =\left\Vert {\sim}\left(  \left(
\varphi_{1,1}\oplus\varphi_{1,2}\right)  \uplus\left(  \varphi_{2,1}%
\oplus\varphi_{2,2}\right)  \right)  \right\Vert (\gamma)\\
&  =\bigoplus_{\gamma^{\prime}\subseteq\gamma}\left\Vert \left(  \varphi
_{1,1}\oplus\varphi_{1,2}\right)  \uplus\left(  \varphi_{2,1}\oplus
\varphi_{2,2}\right)  \right\Vert (\gamma^{\prime})\\
&  =\bigoplus_{\gamma^{\prime}\subseteq\gamma}\left(  \bigoplus_{\gamma
^{\prime}=\gamma_{1}\cupdot \gamma_{2}}\left(  \left\Vert \varphi_{1,1}%
\oplus\varphi_{1,2}\right\Vert (\gamma_{1})\otimes\left\Vert \varphi
_{2,1}\oplus\varphi_{2,2}\right\Vert (\gamma_{2})\right)  \right) \\
&  =\bigoplus_{\gamma^{\prime}\subseteq\gamma}\left(  \bigoplus_{\gamma
^{\prime}=\gamma_{1}\cupdot \gamma_{2}}\left(  \left(  \left\Vert \varphi
_{1,1}\right\Vert (\gamma_{1})\oplus\left\Vert \varphi_{1,2}\right\Vert
(\gamma_{1})\right)  \otimes\left(  \left\Vert \varphi_{2,1}\right\Vert
(\gamma_{2})\oplus\left\Vert \varphi_{2,2}\right\Vert (\gamma_{2})\right)
\right)  \right)  .
\end{align*}

\noindent Let for instance $K=
\mathbb{R}
_{\min}$ and $\gamma=\{\{q_{1},p_{1}\},\{q_{1},p_{2}\},\{q_{2}%
,p_{1}\},\{q_{2},p_{2}\}\}$. Then the value
\[
\left\Vert \zeta\right\Vert (\gamma)=\underset{\gamma^{\prime}\subseteq\gamma
}{\min}\left(  \underset{\gamma^{\prime}=\gamma_{1}\cupdot \gamma_{2}}{\min
}\left(  \min\left( \| \varphi_{1,1}\|(\gamma_{1}),\| \varphi_{1,2} \|(\gamma
_{1})\right)  +\min\left( \| \varphi_{2,1}\| (\gamma_{2}), \| \varphi_{2,2} \|(\gamma
_{2})\right)  \right)  \right)
\]
is the minimum ``cost" among all the implementations of the Master/Slave architecture.
\end{example}

In the sequel, we state several properties of our wPCL formulas.

\begin{proposition}
\label{wcoal_prop}Let $\zeta_{1},\zeta_{2},\zeta_{3}\in PCL(K,P)$. Then

\begin{itemize}
\item[(i)] $\left(  \zeta_{1}\uplus\zeta_{2}\right)  \uplus\zeta_{3}%
\equiv\zeta_{1}\uplus\left(  \zeta_{2}\uplus\zeta_{3}\right)  .$

\item[(ii)] $\zeta_{1}\uplus0\equiv0.$

\item[(iii)] $\zeta_{1}\uplus
\zeta_{2}\equiv\zeta_{2}\uplus\zeta_{1}.$

\item[(iv)] $\zeta_{1}\uplus(\zeta_{2}\oplus\zeta_{3})\equiv(\zeta_{1}\uplus\zeta
_{2})\oplus(\zeta_{1}\uplus\zeta_{3}).
$
\end{itemize}
\end{proposition}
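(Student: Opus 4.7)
The plan is to treat each of the three identities by unfolding Definition \ref{wPCL_sem} on both sides, evaluating at an arbitrary $\gamma \in C(P)$, and then reducing to standard properties of the commutative semiring $K$ together with the fact that set union (restricted to nonempty sets) is associative and commutative.

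For (ii), the argument is immediate. For every $\gamma \in C(P)$,
\[
\|\zeta_1 \uplus 0\|(\gamma) = \bigoplus_{\gamma = \gamma_1 \cup \gamma_2} \|\zeta_1\|(\gamma_1) \otimes \|0\|(\gamma_2),
\]
and since $\|0\|(\gamma_2) = 0$ by the semantic clause for constants, each summand is $0$, so the whole sum equals $0 = \|0\|(\gamma)$.

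For (iii), I would observe that the set of decompositions $\{(\gamma_1,\gamma_2) \in C(P) \times C(P) \mid \gamma = \gamma_1 \cup \gamma_2\}$ is invariant under swapping the two components, and use commutativity of $\otimes$ in $K$ to conclude
\[
\|\zeta_1 \uplus \zeta_2\|(\gamma) = \bigoplus_{\gamma = \gamma_1 \cup \gamma_2} \|\zeta_1\|(\gamma_1) \otimes \|\zeta_2\|(\gamma_2) = \bigoplus_{\gamma = \gamma_2 \cup \gamma_1} \|\zeta_2\|(\gamma_2) \otimes \|\zeta_1\|(\gamma_1) = \|\zeta_2 \uplus \zeta_1\|(\gamma).
\]

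The substantive part is (i). Here I expand the left-hand side and use the semantic clause of $\uplus$ twice, then push the outer $\otimes \|\zeta_3\|(\gamma_3)$ inside the inner sum by distributivity of $\otimes$ over $\oplus$ in $K$, and reassociate using associativity of $\otimes$:
\[
\|(\zeta_1 \uplus \zeta_2) \uplus \zeta_3\|(\gamma) = \bigoplus_{\gamma = \delta \cup \gamma_3} \ \bigoplus_{\delta = \gamma_1 \cup \gamma_2} \|\zeta_1\|(\gamma_1) \otimes \|\zeta_2\|(\gamma_2) \otimes \|\zeta_3\|(\gamma_3).
\]
The key combinatorial step — and the one I expect to be the main obstacle — is the bijection between pairs $((\delta,\gamma_3),(\gamma_1,\gamma_2))$ with $\delta,\gamma_3,\gamma_1,\gamma_2 \in C(P)$, $\gamma=\delta\cup\gamma_3$, and $\delta=\gamma_1\cup\gamma_2$, and the set of triples $(\gamma_1,\gamma_2,\gamma_3)\in C(P)^3$ with $\gamma = \gamma_1 \cup \gamma_2 \cup \gamma_3$. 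One direction of the bijection requires that $\delta=\gamma_1\cup\gamma_2$ lies in $C(P)$, which follows since $\gamma_1,\gamma_2$ are already nonempty; the other direction is trivial. Using this bijection, both sides reduce to the common expression
\[
\bigoplus_{\gamma = \gamma_1 \cup \gamma_2 \cup \gamma_3} \|\zeta_1\|(\gamma_1) \otimes \|\zeta_2\|(\gamma_2) \otimes \|\zeta_3\|(\gamma_3),
\]
and the analogous expansion of the right-hand side, together with associativity of set union, yields the same expression, finishing the proof.
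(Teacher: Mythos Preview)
Your proposal is correct and follows essentially the same route as the paper: the paper only writes out part (i) (declaring (ii) and (iii) straightforward), and its computation for (i) is exactly your expansion via two applications of the semantic clause for $\uplus$, distributivity, and the reindexing of the double sum as a sum over triples $(\gamma_1,\gamma_2,\gamma_3)$ with $\gamma=\gamma_1\cup\gamma_2\cup\gamma_3$. Your explicit remark that the reindexing is a genuine bijection because $\gamma_1\cup\gamma_2$ is automatically nonempty is a detail the paper leaves implicit.
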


\begin{proof}
We prove only (i) and (iv), the other two statements are straightforward. Let
$\gamma\in C(P)$. Then
\begin{align*}
\left\Vert \left(  \zeta_{1}\uplus\zeta_{2}\right)  \uplus\zeta_{3}\right\Vert
(\gamma)  &  =\underset{\gamma=\gamma^{\prime}\cupdot\gamma_{3}}{\bigoplus
}\left(  \left\Vert \zeta_{1}\uplus\zeta_{2}\right\Vert (\gamma^{\prime
})\otimes\left\Vert \zeta_{3}\right\Vert (\gamma_{3})\right) \\
&  =\underset{\gamma=\gamma^{\prime}\cupdot\gamma_{3}}{\bigoplus}\left(  \left(
\underset{\gamma^{\prime}=\gamma_{1}\cupdot\gamma_{2}}{\bigoplus}\left(
\left\Vert \zeta_{1}\right\Vert (\gamma_{1})\otimes\left\Vert \zeta
_{2}\right\Vert (\gamma_{2})\right)  \right)  \otimes\left\Vert \zeta
_{3}\right\Vert (\gamma_{3})\right) \\
&  =\underset{\gamma=\gamma_{1}\cupdot\gamma_{2}\cupdot\gamma_{3}}{\bigoplus}\left(
\left(  \left\Vert \zeta_{1}\right\Vert (\gamma_{1})\otimes\left\Vert
\zeta_{2}\right\Vert (\gamma_{2})\right)  \otimes\left\Vert \zeta
_{3}\right\Vert (\gamma_{3})\right) \\
&  =\underset{\gamma=\gamma_{1}\cupdot\gamma_{2}\cupdot\gamma_{3}}{\bigoplus}\left(
\left\Vert \zeta_{1}\right\Vert (\gamma_{1})\otimes\left(  \left\Vert
\zeta_{2}\right\Vert (\gamma_{2})\otimes\left\Vert \zeta_{3}\right\Vert
(\gamma_{3})\right)  \right) \\
&  =\underset{\gamma=\gamma_{1}\cupdot\gamma\prime}{\bigoplus}\left(  \left\Vert
\zeta_{1}\right\Vert (\gamma_{1})\otimes \left(\underset{\gamma^{\prime}=\gamma
_{2}\cupdot\gamma_{3}}{\bigoplus}\left(  \left\Vert \zeta_{2}\right\Vert
(\gamma_{2})\otimes\left\Vert \zeta_{3}\right\Vert (\gamma_{3})\right)\right)
\right) \\
&  =\left\Vert \zeta_{1}\uplus\left(  \zeta_{2}\uplus\zeta_{3}\right)
\right\Vert (\gamma),
\end{align*}
and hence $\left(  \zeta_{1}\uplus\zeta_{2}\right)  \uplus\zeta_{3}\equiv
\zeta_{1}\uplus\left(  \zeta_{2}\uplus\zeta_{3}\right)  $,\ as required. 

Next 
\begin{align*}
\left\Vert \zeta_{1}\uplus(\zeta_{2}\oplus\zeta_{3})\right\Vert (\gamma)  &
=\underset{\gamma=\gamma_{1}\cupdot\gamma^{\prime}}{\bigoplus}\left(  \left\Vert
\zeta_{1}\right\Vert (\gamma_{1})\otimes\left\Vert \zeta_{2}\oplus\zeta
_{3}\right\Vert (\gamma^{\prime})\right) \\
&  =\underset{\gamma=\gamma_{1}\cupdot\gamma^{\prime}}{\bigoplus}\left(
\left\Vert \zeta_{1}\right\Vert (\gamma_{1})\otimes\left(  \left\Vert
\zeta_{2}\right\Vert (\gamma^{\prime})\oplus\left\Vert \zeta_{3}\right\Vert
(\gamma^{\prime})\right)  \right) \\
&  =\underset{\gamma=\gamma_{1}\cupdot\gamma^{\prime}}{\bigoplus}\left(  \left(
\left\Vert \zeta_{1}\right\Vert (\gamma_{1})\otimes\left\Vert \zeta
_{2}\right\Vert (\gamma^{\prime})\right)  \oplus\left(  \left\Vert \zeta
_{1}\right\Vert (\gamma_{1})\otimes\left\Vert \zeta_{3}\right\Vert
(\gamma^{\prime})\right)  \right) \\
&  =\underset{\gamma=\gamma_{1}\cupdot\gamma^{\prime}}{\bigoplus}\left(
\left\Vert \zeta_{1}\right\Vert (\gamma_{1})\otimes\left\Vert \zeta
_{2}\right\Vert (\gamma^{\prime})\right)  \oplus\underset{\gamma=\gamma
_{1}\cupdot\gamma^{\prime}}{\bigoplus}\left(  \left\Vert \zeta_{1}\right\Vert
(\gamma_{1})\otimes\left\Vert \zeta_{3}\right\Vert (\gamma^{\prime})\right) \\
&  =\left\Vert \zeta_{1}\uplus\zeta_{2}\right\Vert (\gamma)\oplus\left\Vert
\zeta_{1}\uplus\zeta_{3}\right\Vert (\gamma)\\
&  =\left\Vert \left(  \zeta_{1}\uplus\zeta_{2}\right)  \oplus\left(
\zeta_{1}\uplus\zeta_{3}\right)  \right\Vert (\gamma),
\end{align*}
and we are done.
\end{proof}

\hide{
\begin{proposition}
\label{wPCL_prop2}Assume that the semiring $K$ is idempotent and let
$\zeta_{1},\zeta_{2},\zeta_{3}\in PCL(K,P)$. Then%
\[
\zeta_{1}\curlyvee(\zeta_{2}\oplus\zeta_{3})\equiv(\zeta_{1}\curlyvee\zeta
_{2})\oplus(\zeta_{1}\curlyvee\zeta_{3}).
\]

\end{proposition}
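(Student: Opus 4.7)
The plan is to unfold the macro defining $\curlyvee$ on both sides and then reduce the two resulting expressions to a common normal form, using Proposition \ref{wPCL_prop1} to handle the $\uplus$ over $\oplus$ distribution and invoking idempotency of $K$ at the very end to cancel a duplicate summand.

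First I would unfold the left-hand side:
\[
\zeta_1 \curlyvee (\zeta_2 \oplus \zeta_3) = \zeta_1 \oplus (\zeta_2 \oplus \zeta_3) \oplus \bigl(\zeta_1 \uplus (\zeta_2 \oplus \zeta_3)\bigr).
\]
Applying Proposition \ref{wPCL_prop1} to the last summand turns the LHS into
\[
\zeta_1 \oplus \zeta_2 \oplus \zeta_3 \oplus (\zeta_1 \uplus \zeta_2) \oplus (\zeta_1 \uplus \zeta_3),
\]
after invoking associativity and commutativity of $\oplus$ (which hold because $\oplus$ is defined pointwise from the commutative semiring addition).

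Next I would unfold the right-hand side:
\[
(\zeta_1 \curlyvee \zeta_2) \oplus (\zeta_1 \curlyvee \zeta_3) = \zeta_1 \oplus \zeta_2 \oplus (\zeta_1 \uplus \zeta_2) \oplus \zeta_1 \oplus \zeta_3 \oplus (\zeta_1 \uplus \zeta_3).
\]
Here a summand $\zeta_1$ appears twice. At the semantic level, for every $\gamma \in C(P)$ we obtain a sum in $K$ in which the term $\lVert \zeta_1 \rVert(\gamma)$ appears twice; by the idempotency hypothesis $k \oplus k = k$ in $K$, these two copies collapse to one. Hence the RHS equals, pointwise, the same polynomial as the LHS.

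The only genuine subtlety—hardly an obstacle, but worth flagging—is that the rewriting is happening at the level of the polynomial semantics in $K\langle C(P)\rangle$: the identities $\oplus$ associativity/commutativity and idempotency $k\oplus k=k$ are needed pointwise on $K$, which is fine since we assumed $K$ is idempotent commutative. Thus I would organize the proof as a single computation $\lVert \zeta_1 \curlyvee (\zeta_2 \oplus \zeta_3) \rVert(\gamma) = \lVert (\zeta_1 \curlyvee \zeta_2) \oplus (\zeta_1 \curlyvee \zeta_3) \rVert(\gamma)$ for arbitrary $\gamma \in C(P)$, with Proposition \ref{wPCL_prop1} cited once and idempotency cited once.
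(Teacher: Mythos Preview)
Your proposal is correct and follows essentially the same route as the paper: unfold the macro for $\curlyvee$, apply Proposition~\ref{wPCL_prop1} once to distribute $\uplus$ over $\oplus$, and use idempotency of $K$ once to collapse the duplicated $\zeta_1$ summand. The paper organizes the computation as a single chain of equalities from the left-hand side to the right-hand side (adding the extra $\zeta_1$ via idempotency rather than removing it), but the content is identical.
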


\begin{proof}
For every $\gamma\in C(P)$ we have%
\begin{align*}
\left\Vert \zeta_{1}\curlyvee(\zeta_{2}\oplus\zeta_{3})\right\Vert (\gamma)
&  =\left\Vert \zeta_{1}\oplus(\zeta_{2}\oplus\zeta_{3})\oplus\left(
\zeta_{1}\uplus(\zeta_{2}\oplus\zeta_{3})\right)  \right\Vert (\gamma)\\
&  =\left\Vert \zeta_{1}\oplus(\zeta_{2}\oplus\zeta_{3})\right\Vert
(\gamma)\oplus\left\Vert \zeta_{1}\uplus(\zeta_{2}\oplus\zeta_{3})\right\Vert
(\gamma)\\
&  =\left\Vert \zeta_{1}\oplus(\zeta_{2}\oplus\zeta_{3})\right\Vert
(\gamma)\oplus\left\Vert (\zeta_{1}\uplus\zeta_{2})\oplus(\zeta_{1}\uplus
\zeta_{3})\right\Vert (\gamma)\\
&  =\left\Vert \zeta_{1}\right\Vert (\gamma)\oplus\left\Vert \zeta
_{2}\right\Vert (\gamma)\oplus\left\Vert \zeta_{3}\right\Vert (\gamma
)\oplus\left\Vert \zeta_{1}\uplus\zeta_{2}\right\Vert (\gamma)\oplus\left\Vert
\zeta_{1}\uplus\zeta_{3}\right\Vert (\gamma)\\
&  =\left\Vert \zeta_{1}\right\Vert (\gamma)\oplus\left\Vert \zeta
_{2}\right\Vert (\gamma)\oplus\left\Vert \zeta_{1}\uplus\zeta_{2}\right\Vert
(\gamma)\oplus\left\Vert \zeta_{1}\right\Vert (\gamma)\oplus\left\Vert
\zeta_{3}\right\Vert (\gamma)\oplus\left\Vert \zeta_{1}\uplus\zeta
_{3}\right\Vert (\gamma)\\
&  =\left\Vert \zeta_{1}\oplus\zeta_{2}\oplus(\zeta_{1}\uplus\zeta
_{2})\right\Vert (\gamma)\oplus\left\Vert \zeta_{1}\oplus\zeta_{3}\oplus
(\zeta_{1}\uplus\zeta_{3})\right\Vert (\gamma)\\
&  =\left\Vert \zeta_{1}\curlyvee\zeta_{2}\right\Vert (\gamma)\oplus\left\Vert
\zeta_{1}\curlyvee\zeta_{3}\right\Vert (\gamma)\\
&  =\left\Vert (\zeta_{1}\curlyvee\zeta_{2})\oplus(\zeta_{1}\curlyvee\zeta
_{3})\right\Vert (\gamma)
\end{align*}
where the third equality holds by Proposition \ref{wcoal_prop}(iv), and the fifth
one by the idempotency property of $K$.
\end{proof}}

We aim to show that $\otimes$ does not distribute, in general, over $\uplus$.
For this, we consider the semiring $\left(  \mathbb{N},+,\cdot,0,1\right)  $
of natural numbers, the set of ports $P=\{p,q\}$ and the formulas $\zeta,\zeta_{1},\zeta_{2}\ \in PCL(\mathbb{N},P)$ with $\zeta=2$ and $\zeta_1=\zeta_2=1$. We let $\gamma=\{\{p\},\{q\}\}$. Then  
\begin{align*}
\Vert\zeta\otimes(\zeta_{1}\uplus\zeta_{2})\Vert(\gamma)    =\Vert\zeta
\Vert(\gamma)\cdot\Vert\zeta_{1}\uplus\zeta_{2}\Vert(\gamma)
  =\Vert\zeta\Vert(\gamma)\cdot\left(  \underset{\gamma=\gamma_{1}\cupdot
\gamma_{2}}{\sum}\left(  \Vert\zeta_{1}\Vert(\gamma_{1})\cdot\Vert\zeta
_{2}\Vert(\gamma_{2})\right)  \right) 
  =2 \cdot(1+1)=4
\end{align*}
whereas 
\begin{align*}
\Vert(\zeta\otimes\zeta_{1})\uplus(\zeta\otimes\zeta_{2})\Vert(\gamma)  =\underset{\gamma=\gamma_{1}\cupdot \gamma_{2}}{\sum}\left(  \Vert\zeta
\otimes\zeta_{1}\Vert(\gamma_{1})\cdot\Vert\zeta\otimes\zeta_{2}\Vert
(\gamma_{2})\right) 
  =2\cdot2 + 2\cdot2=8.
\end{align*}
Hence $\zeta\otimes(\zeta_{1}\uplus\zeta_{2})
\not \equiv (\zeta\otimes\zeta_{1})\uplus(\zeta\otimes\zeta_{2})$.
Nevertheless, this is not the case whenever $\zeta$ is a PIL formula.
More precisely, we state the subsequent proposition.

\begin{proposition}
\label{conj_over_coal} Let $\phi$ be a \emph{PIL} formula over $P$ and $\zeta_{1},\zeta
_{2}\ \in PCL(K,P)$. Then
\[
\phi\otimes(\zeta_{1}\uplus\zeta_{2})\equiv(\phi\otimes\zeta_{1})\uplus
(\phi\otimes\zeta_{2}).
\]

\end{proposition}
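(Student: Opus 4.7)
The plan is to unfold both sides via Definition \ref{wPCL_sem}, push scalars around using commutativity and distributivity of $\otimes$ over $\oplus$ in $K$, and then reduce the whole statement to the single identity
\[
\|\phi\|(\gamma_1\cup\gamma_2)=\|\phi\|(\gamma_1)\otimes\|\phi\|(\gamma_2) \quad\text{for all } \gamma_1,\gamma_2\in C(P),
\]
which holds because $\phi$ is a PIL formula.

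First I would fix $\gamma\in C(P)$ and expand the left-hand side. By the $\otimes$-clause and then the $\uplus$-clause of Definition \ref{wPCL_sem},
\[
\|\phi\otimes(\zeta_1\uplus\zeta_2)\|(\gamma)=\|\phi\|(\gamma)\otimes\bigoplus_{\gamma=\gamma_1\cup\gamma_2}\bigl(\|\zeta_1\|(\gamma_1)\otimes\|\zeta_2\|(\gamma_2)\bigr),
\]
and I would distribute $\|\phi\|(\gamma)$ inside the sum using distributivity in $K$. Next I would expand the right-hand side: applying the $\uplus$-clause once and the $\otimes$-clause twice gives
\[
\|(\phi\otimes\zeta_1)\uplus(\phi\otimes\zeta_2)\|(\gamma)=\bigoplus_{\gamma=\gamma_1\cup\gamma_2}\bigl(\|\phi\|(\gamma_1)\otimes\|\zeta_1\|(\gamma_1)\otimes\|\phi\|(\gamma_2)\otimes\|\zeta_2\|(\gamma_2)\bigr).
\]
By the commutativity of $K$, the two sides match summand-by-summand provided the identity above relating $\|\phi\|(\gamma)$ to $\|\phi\|(\gamma_1)\otimes\|\phi\|(\gamma_2)$ holds along every decomposition $\gamma=\gamma_1\cup\gamma_2$.

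The heart of the argument is thus verifying that identity, which I expect to be the only nontrivial step. Since $\phi$ is an interaction formula, its PCL-satisfaction clause says $\gamma'\models\phi$ iff $a\models_i\phi$ for every $a\in\gamma'$; hence $\gamma_1\cup\gamma_2\models\phi$ iff $\gamma_1\models\phi$ and $\gamma_2\models\phi$. Reading this in terms of the $\{0,1\}$-valued semantics, and using $1\otimes 1=1$ together with $0\otimes k=k\otimes 0=0$, it yields exactly $\|\phi\|(\gamma_1\cup\gamma_2)=\|\phi\|(\gamma_1)\otimes\|\phi\|(\gamma_2)$. (Equivalently, this is a direct consequence of Proposition \ref{char_inter}, which characterizes interaction formulas as precisely the $\cup$-closed and downward-closed PCL formulas.) Once this is in hand, substituting it into the LHS expression converts it termwise into the RHS expression, completing the proof. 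The bookkeeping is otherwise routine and relies only on the commutative semiring axioms.
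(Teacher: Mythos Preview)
Your proof is correct and follows essentially the same approach as the paper: both reduce the claim to the special behavior of a PIL formula $\phi$ under unions, namely that $\gamma\models\phi$ is checked interaction-by-interaction. The only difference is packaging: the paper performs an explicit case split on whether $\|\phi\|(\gamma)=1$ or $\|\phi\|(\gamma)=0$, whereas you encapsulate both cases in the single multiplicativity identity $\|\phi\|(\gamma_1\cup\gamma_2)=\|\phi\|(\gamma_1)\otimes\|\phi\|(\gamma_2)$, which is a slightly cleaner formulation of the same observation.
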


\begin{proof}
For every $\gamma\in C(P)$ we have%
\begin{align*}
\left\Vert \phi\otimes(\zeta_{1}\uplus\zeta_{2})\right\Vert (\gamma)  &
=\left\Vert \phi\right\Vert (\gamma)\otimes\left\Vert \zeta_{1}\uplus\zeta
_{2}\right\Vert (\gamma)\\
&  =\left\Vert \phi\right\Vert (\gamma)\otimes\left(  \underset{\gamma
=\gamma_{1}\cupdot \gamma_{2}}{\bigoplus}\left(  \left\Vert \zeta_{1}\right\Vert
(\gamma_{1})\otimes\left\Vert \zeta_{2}\right\Vert (\gamma_{2})\right)
\right) \\
&  =\underset{\gamma=\gamma_{1}\cupdot \gamma_{2}}{\bigoplus}\left(  \left\Vert
\phi\right\Vert (\gamma)\otimes(\left\Vert \zeta_{1}\right\Vert (\gamma
_{1})\otimes\left\Vert \zeta_{2}\right\Vert (\gamma_{2}))\right)  .
\end{align*}
We distinguish two cases.

\begin{itemize}
\item $\Vert\phi\Vert(\gamma)=1$. Then by definition, $\gamma\models\phi$
which, by Proposition \ref{char_inter}, implies that $\gamma^{\prime}\models\phi$, and hence $\Vert\phi
\Vert(\gamma^{\prime})=1$\ for every $\gamma^{\prime}\subseteq\gamma$.
Therefore, we get
\begin{align*}
&  \underset{\gamma=\gamma_{1}\cupdot \gamma_{2}}{\bigoplus}\left(  \left\Vert
\phi\right\Vert (\gamma)\otimes(\left\Vert \zeta_{1}\right\Vert (\gamma
_{1})\otimes\left\Vert \zeta_{2}\right\Vert (\gamma_{2}))\right)  \text{ }\\
&  =\underset{\gamma=\gamma_{1}\cupdot \gamma_{2}}{\bigoplus}\left(  \left\Vert
\phi\right\Vert (\gamma_{1})\otimes\left\Vert \zeta_{1}\right\Vert (\gamma
_{1})\otimes\Vert\phi\Vert(\gamma_{2})\otimes\left\Vert \zeta_{2}\right\Vert
(\gamma_{2})\right) \\
&  =\underset{\gamma=\gamma_{1}\cupdot \gamma_{2}}{\bigoplus}\left(  \Vert
\phi\otimes\zeta_{1}\Vert(\gamma_{1})\otimes\Vert\phi\otimes\zeta_{2}%
\Vert(\gamma_{2})\right) \\
&  =\Vert(\phi\otimes\zeta_{1})\uplus(\phi\otimes\zeta_{2})\Vert(\gamma).
\end{align*}

\item $\Vert\phi\Vert(\gamma)=0$. Hence $\gamma\not \models \phi$, i.e., there
is an $a\in\gamma$ such that $a\not \models _{i}\phi$. This in turn implies
that $\gamma^{\prime}\not \models \phi$ for every $\gamma^{\prime}%
\subseteq\gamma$ with $a\in \gamma^{\prime}$. Therefore, we get
\[
\underset{\gamma=\gamma_{1}\cupdot \gamma_{2}}{\bigoplus}\left(  \left\Vert
\phi\right\Vert (\gamma)\otimes(\left\Vert \zeta_{1}\right\Vert (\gamma
_{1})\otimes\left\Vert \zeta_{2}\right\Vert (\gamma_{2}))\right)  =0
\]
and%
\[
\underset{\gamma=\gamma_{1}\cupdot \gamma_{2}}{\bigoplus}\left(  \left\Vert
\phi\right\Vert (\gamma_{1})\otimes\left\Vert \zeta_{1}\right\Vert (\gamma
_{1})\otimes\Vert\phi\Vert(\gamma_{2})\otimes\left\Vert \zeta_{2}\right\Vert
(\gamma_{2})\right)  =0,
\]
i.e.,
\[
\left\Vert \phi\otimes(\zeta_{1}\uplus\zeta_{2})\right\Vert (\gamma
)=0=\Vert(\phi\otimes\zeta_{1})\uplus(\phi\otimes\zeta_{2})\Vert(\gamma)
\]
and this concludes our proof. 
\end{itemize}
\end{proof}

\begin{proposition}
\label{wclos_prop} Let $\zeta,\zeta_{1},\zeta_{2}\in PCL(K,P)$. Then

\begin{itemize}
\item[(i)] ${\sim}(\zeta_{1}\oplus\zeta_{2})\equiv{\sim}\zeta_{1}\oplus{\sim}
\zeta_{2}.$

\end{itemize}

\noindent If in addition $K$ is idempotent, then

\begin{itemize}
\item[(ii)] ${\sim}(\zeta_{1}\uplus\zeta_{2})\equiv{\sim}\zeta_{1}\uplus{\sim}
\zeta_{2}.$

\item[(iii)] ${\sim}{\sim}\zeta\equiv{\sim}\zeta.$
\end{itemize}
\end{proposition}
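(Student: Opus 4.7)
My plan is to verify each equivalence by a direct computation at an arbitrary $\gamma\in C(P)$, repeatedly applying the semantic identity $\|{\sim}\zeta\|(\gamma)=\bigoplus_{\gamma'\subseteq\gamma}\|\zeta\|(\gamma')$ noted just before the proposition together with the definitions of $\oplus$, $\otimes$, $\uplus$ from Definition \ref{wPCL_sem}. Parts (i) and (ii) should hold over arbitrary $K$, while (iii) and (iv) will use the idempotency of $\oplus$ to absorb multiplicities coming from the many ways a set can be written as a union.

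For (i), the cleanest route is algebraic: unfold ${\sim}(\zeta_1\oplus\zeta_2)=(\zeta_1\oplus\zeta_2)\uplus 1$, then apply Proposition \ref{wcoal_prop}(iii) (commutativity of $\uplus$) together with Proposition \ref{wPCL_prop1} (distributivity of $\uplus$ over $\oplus$) to split it as $(\zeta_1\uplus 1)\oplus(\zeta_2\uplus 1)={\sim}\zeta_1\oplus{\sim}\zeta_2$. For (ii), I would expand
$$\|{\sim}(\zeta_1\uplus\zeta_2)\|(\gamma)=\bigoplus_{\gamma'\subseteq\gamma}\;\bigoplus_{\gamma'=\alpha\cup\beta}\|\zeta_1\|(\alpha)\otimes\|\zeta_2\|(\beta),$$
observe that for a pair $(\alpha,\beta)\in C(P)^2$ with $\alpha\cup\beta\subseteq\gamma$ the outer index $\gamma'$ is forced to equal $\alpha\cup\beta$, and reindex the double sum as $\bigoplus_{\alpha,\beta\subseteq\gamma}\|\zeta_1\|(\alpha)\otimes\|\zeta_2\|(\beta)$. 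Distributivity of $\otimes$ over $\bigoplus$ then factors this as $\|{\sim}\zeta_1\|(\gamma)\otimes\|{\sim}\zeta_2\|(\gamma)=\|{\sim}\zeta_1\otimes{\sim}\zeta_2\|(\gamma)$.

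Parts (iii) and (iv) require idempotency. For (iii), (ii) already gives ${\sim}(\zeta_1\uplus\zeta_2)\equiv{\sim}\zeta_1\otimes{\sim}\zeta_2$, so it suffices to prove ${\sim}\zeta_1\otimes{\sim}\zeta_2\equiv{\sim}\zeta_1\uplus{\sim}\zeta_2$. Expanding the right-hand side and pushing the two closures inward gives a sum indexed by quadruples $(\alpha,\beta,\delta_1,\delta_2)$ with $\alpha\cup\beta=\gamma$ and $\delta_i\subseteq$ the corresponding component; for every fixed $(\delta_1,\delta_2)$ with $\delta_1,\delta_2\subseteq\gamma$ the choice $\alpha=\beta=\gamma$ is a valid witness, so idempotency of $\oplus$ collapses the repeated summand $\|\zeta_1\|(\delta_1)\otimes\|\zeta_2\|(\delta_2)$ to a single copy, yielding $\bigoplus_{\delta_1,\delta_2\subseteq\gamma}\|\zeta_1\|(\delta_1)\otimes\|\zeta_2\|(\delta_2)=\|{\sim}\zeta_1\|(\gamma)\otimes\|{\sim}\zeta_2\|(\gamma)$. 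For (iv), apply the closure formula twice:
$$\|{\sim}{\sim}\zeta\|(\gamma)=\bigoplus_{\gamma'\subseteq\gamma}\|{\sim}\zeta\|(\gamma')=\bigoplus_{\delta\subseteq\gamma'\subseteq\gamma}\|\zeta\|(\delta);$$
since for each $\delta\subseteq\gamma$ at least one intermediate $\gamma'$ exists (e.g.\ $\gamma'=\gamma$), idempotency collapses the repetitions to $\bigoplus_{\delta\subseteq\gamma}\|\zeta\|(\delta)=\|{\sim}\zeta\|(\gamma)$.

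The main obstacle is the index-set bookkeeping in (iii): after unfolding the two closures one must identify which quadruples survive and verify that every admissible pair $(\delta_1,\delta_2)$ with $\delta_i\subseteq\gamma$ appears at least once in the expanded sum, so that idempotency can legitimately erase the multiplicities. Once this counting step is in place, the remaining manipulations are routine applications of the semantic definitions together with Propositions \ref{wcoal_prop}, \ref{wPCL_prop1} and part (ii).
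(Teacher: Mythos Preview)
Your argument is correct and close in spirit to the paper's, but two of the four parts are organised differently. For (ii) and (iv) you do exactly what the paper does: unfold the closure formula $\|{\sim}\zeta\|(\gamma)=\bigoplus_{\gamma'\subseteq\gamma}\|\zeta\|(\gamma')$, reindex, and (for (iv)) invoke idempotency to collapse repeated summands. For (i) the paper carries out the same kind of direct semantic computation, whereas you take a purely syntactic shortcut via the macro ${\sim}\xi=\xi\uplus 1$ together with Propositions~\ref{wcoal_prop}(iii) and~\ref{wPCL_prop1}; this is a legitimate and slightly slicker route that avoids touching $\gamma$ at all. For (iii) the paper works directly from $\|{\sim}(\zeta_1\uplus\zeta_2)\|(\gamma)$, reindexes the double sum over $(\gamma',\gamma_1',\gamma_2')$ with $\gamma'=\gamma_1'\cup\gamma_2'\subseteq\gamma$ as a sum over $(\gamma_1,\gamma_2,\gamma_1',\gamma_2')$ with $\gamma_1\cup\gamma_2=\gamma$ and $\gamma_i'\subseteq\gamma_i$ (using idempotency), and then recognises $\|{\sim}\zeta_1\uplus{\sim}\zeta_2\|(\gamma)$. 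You instead reduce (iii) to (ii) and then prove ${\sim}\zeta_1\otimes{\sim}\zeta_2\equiv{\sim}\zeta_1\uplus{\sim}\zeta_2$ by the same idempotency-based collapse; the bookkeeping is the same counting argument viewed from the other end, and your ``obstacle'' paragraph identifies precisely the step the paper also needs. Either organisation is fine.
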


\begin{proof}
For every $\gamma\in C(P)$ we have

(i)%
\begin{align*}
\left\Vert {\sim}(\zeta_{1}\oplus\zeta_{2})\right\Vert (\gamma)  &
=\underset{\gamma^{\prime}\subseteq\gamma}{\bigoplus}\left\Vert \zeta
_{1}\oplus\zeta_{2}\right\Vert (\gamma^{\prime})  =\underset{\gamma^{\prime}\subseteq\gamma}{\bigoplus}\left(  \left\Vert
\zeta_{1}\right\Vert (\gamma^{\prime})\oplus\left\Vert \zeta_{2}\right\Vert
(\gamma^{\prime}\right)  )\\
&  =\underset{\gamma^{\prime}\subseteq\gamma}{\bigoplus}\left\Vert \zeta
_{1}\right\Vert (\gamma^{\prime})\oplus\underset{\gamma^{\prime}%
\subseteq\gamma}{\bigoplus}\left\Vert \zeta_{2}\right\Vert (\gamma^{\prime}) =\left\Vert {\sim}\zeta_{1}\right\Vert (\gamma)\oplus\left\Vert {\sim}\zeta
_{2}\right\Vert (\gamma)\\
&  =\left\Vert {\sim}\zeta_{1}\oplus{\sim}\zeta_{2}\right\Vert (\gamma).
\end{align*}

(ii)%
\begin{align*}
\left\Vert {\sim}(\zeta_{1}\uplus\zeta_{2})\right\Vert (\gamma)  &
=\underset{\gamma^{\prime}\subseteq\gamma}{\bigoplus}\left\Vert \zeta
_{1}\uplus\zeta_{2}\right\Vert (\gamma^{\prime})\\
&  =\underset{\gamma^{\prime}\subseteq\gamma}{\bigoplus}\left(  \underset
{\gamma^{\prime}=\gamma_{1}^{\prime}\cupdot \gamma_{2}^{\prime}}{\bigoplus}\left(
\left\Vert \zeta_{1}\right\Vert (\gamma_{1}^{\prime})\otimes\left\Vert
\zeta_{2}\right\Vert (\gamma_{2}^{\prime})\right)  \right) \\
&  =\underset{\gamma=\gamma_{1}\cupdot \gamma_{2}}{\bigoplus}\left(
\underset{\gamma_{2}^{\prime}\subseteq\gamma_{2}}{\underset{\gamma_{1}%
^{\prime}\subseteq\gamma_{1}}{\bigoplus}}\left(  \left\Vert \zeta
_{1}\right\Vert (\gamma_{1}^{\prime})\otimes\left\Vert \zeta_{2}\right\Vert
(\gamma_{2}^{\prime})\right)  \right) \\
&  =\underset{\gamma=\gamma_{1}\cupdot \gamma_{2}}{\bigoplus}\left(  \left(
\underset{\gamma_{1}^{\prime}\subseteq\gamma_{1}}{\bigoplus}\Vert\zeta
_{1}\Vert(\gamma_{1}^{\prime})\right)  \otimes\left(  \underset{\gamma
_{2}^{\prime}\subseteq\gamma_{2}}{\bigoplus}\Vert\zeta_{2}\Vert(\gamma
_{2}^{\prime})\right)  \right) \\
&  =\underset{\gamma=\gamma_{1}\cupdot \gamma_{2}}{\bigoplus}\left(  \Vert
{\sim}\zeta_{1}\Vert(\gamma_{1})\otimes\Vert{\sim}\zeta_{2}\Vert(\gamma
_{2})\right) \\
&  =\left\Vert {\sim}\zeta_{1}\uplus{\sim}\zeta_{2}\right\Vert (\gamma).
\end{align*}
where the third equality holds since $K$ is idempotent\footnote{For instance, if $\gamma = \{a,b,c \}$, then the product $\|\zeta_1\|(a) \otimes \|\zeta_2\|(b)$ occurs only once in the left-hand side of the third equality, but two times in the right-hand side one.}.

(iii)
\begin{align*}
\left\Vert {\sim}{\sim}\zeta\right\Vert (\gamma)  &  =\underset{\gamma
_{1}\subseteq\gamma}{\bigoplus}\left\Vert {\sim}\zeta\right\Vert (\gamma_{1})  =\underset{\gamma_{1}\subseteq\gamma}{\bigoplus}\left(  \underset
{\gamma_{2}\subseteq\gamma_{1}}{\bigoplus}\left\Vert \zeta\right\Vert
(\gamma_{2})\right)   =\underset{\gamma_{1}\subseteq\gamma}{\bigoplus}\left\Vert \zeta\right\Vert
(\gamma_{1})  =\left\Vert {\sim}\zeta\right\Vert (\gamma)
\end{align*}
where the third equality holds by the idempotency property of $K$.
\end{proof}

In the subsequent examples, we present wPCL formulas describing well-known architectures equipped with quantitative features. 
\begin{example}
[Weighted Star architecture]		
		Star architecture is a software architecture relating components of the same type. More precisely, given a set of components one of them is considered as the  \emph{central component} and is connected to every other component through a binary interaction. No other interactions are permitted.		
		\begin{figure}
				\begin{center}
					\begin{tikzpicture}
					\draw  (0.7,-9.75) rectangle  (2,-9); 
					\draw  (7,-7.85) rectangle  (8.3,-7.1); 
					\draw  (7,-9) rectangle  (8.3,-8.25);
					\draw  (7,-10.15) rectangle  (8.3,-9.4);
					\draw  (7,-11.3) rectangle  (8.3,-10.55);

					\draw  (1.45,-9.55) rectangle  (2,-9.15); 
					\node at (1.7,-9.4) {$ s_1$};
					\draw[fill] (2,-9.35) circle [radius=2pt];
					
					\draw  (7,-7.7) rectangle  (7.57,-7.3); 
					\node at (7.35,-7.54) {$ s_2$};
					\draw[fill] (7.02,-7.5) circle [radius=2pt];
					
					\draw  (7,-8.85) rectangle  (7.57,-8.45); 
					\node at (7.35,-8.7) {$ s_3$};
					\draw[fill] (7.02,-8.65) circle [radius=2pt];
					
					\draw  (7,-10) rectangle  (7.57,-9.6); 
					\node at (7.35,-9.85) {$ s_4$};
					\draw[fill] (7.02,-9.8) circle [radius=2pt];
					
					\draw  (7,-11.15) rectangle  (7.57,-10.75); 
					\node at (7.35,-11) {$ s_5$};
					\draw[fill] (7.02,-10.95) circle [radius=2pt];
					
					\draw (2,-9.35)--(7.02,-7.5);
					\draw (2,-9.35)--(7.02,-8.65);
					\draw (2,-9.35)--(7.02,-9.8);
					\draw (2,-9.35)--(7.02,-10.95);
					
					\node at (5,-8.0) {$ k_2$};
					\node at (5,-8.7) {$ k_3$};
					\node at (5.05,-9.35) {$ k_4$};
					\node at (5.1,-10.1) {$ k_5$};

					\end{tikzpicture}
				\end{center}
			\caption{Weighted Star architecture.}
			\end{figure}
				Next, we build a \emph{wPCL} formula representing the Star architecture with five components. We assume that every component has a single port and let $I=\{1,2,3,4,5\}$ and $ P=\{ s_1, s_2, s_3, s_4, s_5 \} $. For simplicity we call every component by its port name. Let us consider firstly that $s_1$ is the central component. The architecture is illustrated in Figure 2. We denote by $k_i \in K$ the weight of the binary interaction between $s_1$ and $s_i$ for every $i \in I\setminus\{1\}$. The \emph{wPIL} formula characterizing this interaction, for every $i \in I\setminus\{1\}$, is given by $ \zeta_{i}= k_{i} \otimes m_{\{ s_1,  s_i \}}$. 
		Since $s_1$ interacts with all the other ports, the Star architecture is described by the subsequent \emph{wPCL} formula:
		\[ \zeta = \zeta_2 \uplus \zeta_3 \uplus \zeta_4 \uplus \zeta_5 . \]
		
\noindent Now, we consider the case where every component in the architecture can be the central one, hence we get five versions of the Star architecture with five components. By assuming that the weight of every interaction refers to the cost of its implementation, we should like to compute which version of the Star architecture has the minimum cost and which one has the  maximum cost. Therefore, we consider for every $i \in I$ the corresponding architecture with $s_i$ being the central component. We denote by $k_{ij} \in K$ the weight of the binary interaction between the central component $s_i$ and the component $s_j$ with $j\in I \setminus \{i\}$. The \emph{wPCL} formula characterizing this interaction is given by $ \zeta_{ij}= k_{ij} \otimes m_{\{s_i, s_j \}}$. Therefore, the \emph{wPCL} formula 
		\[ \zeta_i^\prime = \biguplus_{j\in I \setminus \{i\}} \zeta_{ij} \]
				describes the binary interactions of the central component $s_i$ with the rest of all the other components. We conclude to the next \emph{wPCL} formula $\zeta'$ which describes the five alternative versions of the Star architecture, derived by five components, and the total cost: 
		\[ \zeta^\prime = \sim \left(\zeta_1^\prime \oplus \zeta_2^\prime \oplus \zeta_3^\prime\oplus \zeta_4^\prime \oplus \zeta_5^\prime \right). \]
		Hence, for $K=\mathbb{R}_{\min}$ (resp. $K=\mathbb{R}_{\max}$) and $\gamma=\{\{s_i,s_j\} \mid i,j \in I \text{ and } i\neq j \}$ we get the architecture with the minimum (resp. maximum) cost.
				\end{example}

	\begin{example}[Weighted Pipes and Filters architecture]\label{P/F} 	The Pipes and Filters architecture (cf. \cite{Ga:An}) involves two types of components, the pipes and the filters denoted respectively, by the letters $P$ and $F$. Every $P$ component has two ports $p.in$ and $p.out$, and every $F$ component has also two ports $f.in$ and $f.out$ (Figure \ref{pi_fi_ex}). Every $f.in$ (resp. $f.out$) port of a filter is connected to a $p.out$ (resp. $p.in$) port of a single pipe. The $p.out$ port of any pipe can be connected to at most one filter port $f.in$. 
	
	\begin{figure}
			\begin{center}
				\begin{tikzpicture}
				\draw  (0,0) rectangle  (1.9,0.6);
				\draw  (3,0) rectangle  (4.9,0.6);
				\draw  (0,-1.3) rectangle  (1.9,-0.7);
				\draw  (3,-1.3) rectangle  (4.9,-0.7);
				\draw  (6,-0.65) rectangle  (7.9,-0.05);
				\draw  (9,-0.65) rectangle  (10.9,-0.05);
				\draw  (12,-0.65) rectangle  (13.9,-0.05);
				
				\node at (0.9,0.3) {\scriptsize $P$};
				\node at (3.9,0.3) {\scriptsize $F$};
				\node at (0.9,-1) {\scriptsize $P$};
				\node at (3.9,-1) {\scriptsize $F$};
				\node at (6.9,-0.35) {\scriptsize $P$};
				\node at (9.9,-0.35) {\scriptsize $F$};
				\node at (12.9,-0.35) {\scriptsize $P$};
				
				\draw  (0,0.15) rectangle  (0.55,0.45);
				\draw  (3,0.15) rectangle  (3.55,0.45);
				\draw  (0,-1.15) rectangle  (0.55,-0.85);
				\draw  (3,-1.15) rectangle  (3.55,-0.85);
				\draw  (6,-0.5) rectangle  (6.55,-0.2);
				\draw  (9,-0.5) rectangle  (9.55,-0.2);
				\draw  (12,-0.5) rectangle  (12.55,-0.2);
				
				\draw  (1.2,0.15) rectangle  (1.9,0.45);
				\draw  (4.2,0.15) rectangle  (4.9,0.45);
				\draw  (1.2,-1.15) rectangle  (1.9,-0.85);
				\draw  (4.2,-1.15) rectangle  (4.9,-0.85);
				\draw  (7.2,-0.5) rectangle  (7.9,-0.2);
				\draw  (10.2,-0.5) rectangle  (10.9,-0.2);
				\draw  (13.2,-0.5) rectangle  (13.9,-0.2);
				
				\node at (0.3,0.3) {\tiny $p.in$};
				\node at (1.56,0.3) {\tiny $p.out$};
				\node at (3.3,0.3) {\tiny $f.in$};
				\node at (4.55,0.3) {\tiny $f.out$};
				\node at (0.3,-1) {\tiny $p.in$};
				\node at (1.56,-1) {\tiny $p.out$};
				\node at (3.3,-1) {\tiny $f.in$};
				\node at (4.55,-1) {\tiny $f.out$};
				
				\node at (6.3,-0.35) {\tiny $p.in$};
				\node at (7.56,-0.35) {\tiny $p.out$};
				
				\node at (9.3,-0.35) {\tiny $f.in$};
				\node at (10.55,-0.35) {\tiny $f.out$};
				
				\node at (12.3,-0.35) {\tiny $p.in$};
				\node at (13.56,-0.35) {\tiny $p.out$};
				
				\draw[fill] (0,0.3) circle [radius=1pt];
				\draw[fill] (1.9,0.3) circle [radius=1pt];
				
				\draw[fill] (3,0.3) circle [radius=1pt];
				\draw[fill] (4.9,0.3) circle [radius=1pt];
				
				\draw[fill] (0,-1) circle [radius=1pt];
				\draw[fill] (1.9,-1) circle [radius=1pt];
				
				\draw[fill] (3,-1) circle [radius=1pt];
				\draw[fill] (4.9,-1) circle [radius=1pt];
				
				\draw[fill] (6,-0.35) circle [radius=1pt];
				\draw[fill] (7.9,-0.35) circle [radius=1pt];
				
				\draw[fill] (9,-0.35) circle [radius=1pt];
				\draw[fill] (10.9,-0.35) circle [radius=1pt];
				
				\draw[fill] (12,-0.35) circle [radius=1pt];
				\draw[fill] (13.9,-0.35) circle [radius=1pt];

				\draw (1.9,0.3)--(3,0.3);
				\draw (1.9,-1)--(3,-1);
				\draw (4.9,0.3)--(6,-0.35);
				\draw (4.9,-1)--(6,-0.35);
				\draw (7.9,-0.35)--(9,-0.35);
				\draw (10.9,-0.35)--(12,-0.35);
				\end{tikzpicture}
			\end{center}
			\caption{Pipes and Filters architecture.}
			\label{pi_fi_ex}
		\end{figure}
		
	In our example, we develop a \emph{wPCL} formula characterizing the Pipes and Filters architecture for four pipes and three filters. For this, we let $I=\{1,2,3\}$ and denote the three filter components by $F_1, F_2, F_3$ with ports $f_i.in$ and $f_i.out$ for $i\in I$, respectively. Similarly, we let $J=\{1,2,3,4\}$ and denote the four pipe components by $P_1, P_2, P_3, P_4$ with ports $p_j.in$ and $p_j.out$ for $j\in J$, respectively. Hence, the set $P$ of all ports is determined by $P=\{ f_i.in, f_i.out, p_j.in, p_j.out, \mid i \in I, \ j\in J \}$. For every $i\in I$ and $j\in J$ we shall denote by $k_{ij} \in K$ the weight of the interaction among the ports $f_i.in$ and $p_j.out$. Similarly, for every $i\in I$ and $j\in J$ the weight of the interaction between the ports $f_i.out$ and $p_j.in$ is denoted by $k_{ij}^\prime \in K.$ The \emph{wPCL} formula that formalizes the interactions between input ports of filters and output ports of pipes with the corresponding weights is
$$\zeta_1= \bigotimes_{i \in I} \left(\bigoplus_{j \in J} \left((\sim(f_i.in \wedge p_j.out)) \otimes k_{ij} \otimes \left (  \bigwedge_{j' \in J\setminus \{j\}} \overline{f_i.in \wedge p_{j'}.out} \right)  \right)\right)$$	
whereas the  \emph{wPCL} formula 
$$\zeta_2= \bigotimes_{i \in I}\left(\bigoplus_{j \in J} \left((\sim(f_i.out \wedge p_j.in)) \otimes k'_{ij} \otimes \left (  \bigwedge_{j' \in J\setminus \{j\}} \overline{f_i.out \wedge p_{j'}.in} \right)  \right) \right)$$	
describes the interactions and their weights among output ports of filters and input ports of pipes. Furthermore, we need to ensure that the output of a pipe can be connected to at most one input of a filter, and that the pipes can be connected only with filters and vice-versa. These requirements are satisfied respectively, by the following \emph{PCL} formulas:
$$g=\bigwedge_{j \in J} \left(\bigsqcup_{i \in I} \left(\bigwedge_{i' \in I \setminus \{i\}} \overline{p_j.out \wedge f_{i'}.in} \right)  \right)  \quad \text{ and } \quad g'=\bigvee_{i\in I, j\in J} \left ( m_{\{f_i.in,p_j.out\}} \vee  m_{\{f_i.out,p_j.in\}} \right).$$	
We conclude to the \emph{wPCL} formula 
	$$ \zeta = \sim \left( \zeta_1 \otimes \zeta_2 \otimes g  \right) \otimes g'$$
for the weighted Pipes and Filters architecture for four pipes and three filters. 
\hide{	
	Next let $K={\mathbb{R}}_{\min}$ and $\gamma \in C(P)$ be a set of interactions.  Then we have 
	\begin{align*}
	\left\Vert \zeta\right\Vert (\gamma) & = \left\Vert \sim \left( \zeta_1 \otimes \zeta_2 \otimes g  \right) \otimes g' \right\Vert (\gamma) \\ 
	& = \left\Vert \sim \left( \zeta_1 \otimes \zeta_2 \otimes g  \right) \right\Vert (\gamma) + \left\Vert g' \right\Vert (\gamma)  \\ 
	& = \min_{\gamma^\prime \subseteq \gamma}\left( \left\Vert \zeta_1 \otimes \zeta_2 \otimes g  \right\Vert (\gamma^\prime)  \right)  + \left\Vert g' \right\Vert (\gamma) \\ & = \min_{\gamma^\prime \subseteq \gamma}\left( \left\Vert \zeta_1\right\Vert (\gamma^\prime)  + \left\Vert  \zeta_2 \right\Vert (\gamma^\prime) + \left\Vert  g  \right\Vert (\gamma^\prime) \right )  + \left\Vert g' \right\Vert (\gamma).
	\end{align*}
	For instance, if 
	\begin{align*}
	\gamma =  & \{ \{f_1.in , p_3.out\}, \{f_3.in , p_1.out\}, \{f_2.in , p_3.out\}, \{f_1.in , p_2.out\}, \{f_2.in , p_2.out\},  \\ &    \{ f_1.out, p_1.in\},   \{ f_2.out, p_1.in\}, \{ f_2.out, p_3.in\}, \{ f_3.out, p_2.in\}, \{ f_3.out, p_3.in\}   \},
	\end{align*}
	then the value $
	\left\Vert \zeta \right\Vert (\gamma)$
	returns the Pipes and Filters architecture with the minimum weight among all the ones that can be constructed with the given set of interactions $\gamma$.}
	\end{example}

\begin{example}
	[Weighted Publish/Subscribe architecture]\label{P/S} \emph{Publish/Subscribe} is a software architecture, relating \emph{publishers}
	who send messages, and receivers called \emph{subscribers} (cf. for instance \cite{Eu:Th,Ha:Ap}). The main
	characteristics of this architecture are as follows. The publishers
	characterize messages according to classes/topics but they do not know whether
	there is any subscriber who is interested in a concrete topic. Subscribers, on
	the other hand, express their interest in one or more topics and receive all messages published to the topics to which they subscribe. All
	subscribers which are subscribed to a topic will receive the same messages  in case such messages exist (Figure \ref{pub-sub}). 
		A main problem in Publish/Subscribe architecture refers to the priority according to which a topic receives messages from the publishers as well as the priority according to which a subscriber receives messages from several topics (cf. for instance \cite{Ba:Dy,Ba:On,Zh:To}).
		 		 Recent applications in IoT framework as well as in cloud platforms incorporate Publish/Subscribe architecture  (cf. for instance \cite{Ol:AP,Pa:Pu, Ya:Pr}).  	
 	
 	In the sequel, we develop a \emph{wPCL} formula for the Publish/Subscribe architecture where, according to the semiring used, the weights represent in a natural way the aforementioned priorities. More precisely, we assign weights, describing priorities, to interactions among publishers and topics, and to interactions among topics and subscribers.  We consider three types
	of components namely, publishers, topics, and subscribers denoted by the
	letters $P,T,S$, respectively. Component $P$ has one port $p$, $T$ has two ports
	$t_{1}$ and $t_{2}$, and $S$ has the port $s$.  
		\begin{figure} 
			\begin{center}
			  	\begin{tikzpicture}
			  
			  \draw  (-0.5,0) rectangle  (1,1) ;
			  \node at (0.25,0.8) {\tiny $P_1$};
			  \draw  (0.5,0.3) rectangle (1,0.6);
			  \node at (0.8,0.45) {\tiny $p_1$};
			  \draw[fill] (1,0.45) circle [radius=1.5pt];

			  \draw  (-0.5,-2) rectangle  (1,-1) ;
			  \node at (0.25,-1.2) {\tiny $P_2$};
			  \draw  (0.5,-1.7) rectangle (1,-1.4);
			  \node at (0.8,-1.55) {\tiny $p_2$};
			  \draw[fill] (1,-1.55) circle [radius=1.5pt];

			  \draw[dashed] (0.25,-2.3)--(0.25,-3);
			  
			  \draw  (-0.5,-3.5) rectangle  (1,-4.5) ;
			  \node at (0.25,-3.7) {\tiny $P_n$};
			  \draw  (0.5,-4.2) rectangle (1,-3.9);
			  \node at (0.8,-4.05) {\tiny $p_n$};
			  \draw[fill] (1,-4.05) circle [radius=1.5pt];

			  \draw (4,-0.2) ellipse (0.9cm and 0.5cm);
			  \node at (4,-0.2) {\tiny $T_1$};
			  \draw (3.35,-0.2) ellipse (0.25cm and 0.25cm);
			  \node at (3.35,-0.2) {\tiny $t_{11}$};
			  \draw (4.65,-0.2) ellipse (0.25cm and 0.25cm);
			  \node at (4.65,-0.2) {\tiny $t_{12}$};
			  \draw[fill] (3.1,-0.2) circle [radius=1.5pt];
			  \draw[fill] (4.9,-0.2) circle [radius=1.5pt];

			  \draw (4,-1.6) ellipse (0.9cm and 0.5cm);
			  \node at (4,-1.6) {\tiny $T_2$};
			  \draw (3.35,-1.6) ellipse (0.25cm and 0.25cm);
			  \node at (3.35,-1.6) {\tiny $t_{21}$};
			  \draw (4.65,-1.6) ellipse (0.25cm and 0.25cm);
			  \node at (4.65,-1.6) {\tiny $t_{22}$};
			  \draw[fill] (3.1,-1.6) circle [radius=1.5pt];
			  \draw[fill] (4.9,-1.6) circle [radius=1.5pt];

			  \draw[dashed] (4,-2.5)--(4,-3);
			  
			  \draw (4,-3.8) ellipse (0.9cm and 0.5cm);
			  \node at (4,-3.8) {\tiny $T_m$};
			  \draw (3.38,-3.8) ellipse (0.3cm and 0.28cm);
			  \node at (3.35,-3.8) {\tiny $\ \ t_{m1} $};
			  \draw (4.6,-3.8) ellipse (0.3cm and 0.28cm);
			  \node at (4.65,-3.8) {\tiny $t_{m2} \ $};
			  \draw[fill] (3.1,-3.8) circle [radius=1.5pt];
			  \draw[fill] (4.9,-3.8) circle [radius=1.5pt];

			  \draw  (7.2,0) rectangle  (8.7,1) ;
			  \node at (7.95,0.8) {\tiny $S_1$};
			  \draw  (7.2,0.3) rectangle (7.6,0.6);
			  \node at (7.4,0.425) {\tiny $s_1$};
			  \draw[fill] (7.2,0.425) circle [radius=1.5pt];
			  
			  \draw  (7.2,-1.5) rectangle  (8.7,-0.5) ;
			  \node at (7.95,-0.7) {\tiny $S_2$};
			  \draw  (7.2,-1.2) rectangle (7.6,-0.9);
			  \node at (7.4,-1.075) {\tiny $s_2$};
			  \draw[fill] (7.2,-1.075) circle [radius=1.5pt];
			  
			  \draw  (7.2,-3) rectangle  (8.7,-2) ;
			  \node at (7.95,-2.2) {\tiny $S_3$};
			  \draw  (7.2,-2.7) rectangle (7.6,-2.4);
			  \node at (7.4,-2.575) {\tiny $s_3$};
			  \draw[fill] (7.2,-2.575) circle [radius=1.5pt];
			  
			  \draw[dashed] (7.95,-3.2)--(7.95,-3.7);
			  
			  \draw  (7.2,-5) rectangle  (8.7,-4) ;
			  \node at (7.95,-4.2) {\tiny $S_r$};
			  \draw  (7.2,-4.7) rectangle (7.6,-4.4);
			  \node at (7.4,-4.575) {\tiny $s_r$};
			  \draw[fill] (7.2,-4.575) circle [radius=1.5pt];

			  \draw (1,0.45)--(3.1,-0.2);
			  \node at (2.2,0.35) { \footnotesize $w_{11}$};
			  \draw  (1,0.45)--(3.1,-3.8);
			  \node at (2.25,-1.3) { \footnotesize $w_{1m}$};
			  
			  \draw (4.9,-0.2)--(7.2,0.425);
			  \node at (5.8,0.3) { \footnotesize $k_{11}$};

			  \draw (4.9,-1.6)--(7.2,0.425);
			  \node at (5.8,-0.45) { \footnotesize $k_{12}$};

			  \draw (4.9,-3.8)--(7.2,0.425);			
			  \node at (5.8,-1.5) { \footnotesize $k_{1m}$};

			  \end{tikzpicture}  
				
			\end{center}
			\caption{Weighted Publish/Subscribe architecture.}
			\label{pub-sub}
		\end{figure}
	
		\noindent   In our example we assume  two publisher components $P_1, P_2$ with ports $p_1, p_2$ respectively, four subscriber components $S_1, S_2, S_3,S_4$ with ports $s_1,s_2,s_3, s_4$ respectively, and three topic components $T_1, T_2, T_3$ with sets of ports $\{t_{11}, t_{12}\}$, $\{t_{21}, t_{22}\}$, and $\{t_{31},t_{32}\}$ respectively. Therefore, the set $P$ of all ports is given by  $P=\{p_1, p_2, s_1, s_2, s_3, s_4, t_{11}, t_{12}, t_{21}, t_{22}, t_{31}, t_{32}  \}$. For every $i\in \{1,2,3,4\}$, $j\in \{1,2,3\}$, and $l \in\{1,2\}$ we denote by $k_{ij} \in K $ the weight of the interaction among $S_i$ and $T_j$, i.e., the priority that the subscriber $S_i$ assigns to the receivement of a message from $T_j$, and by $w_{lj} \in K$ the weight of the interaction among $P_l$ and $T_j$, i.e., the priority that the topic $T_j$ assigns to the receivement of a message from $P_l$. The \emph{PIL} formula for the interaction between a publisher $P_l$ and a topic $T_j$, for every $l\in \{1,2\}$ and $j \in \{1,2,3\}$, is given by $\phi_{pt}(p_l,t_{j1})= m_{\{p_l, t_{j1}\}}$ 
	and the \emph{wPIL} formula characterizing this interaction with its corresponding weight is determined by $\varphi_{pt}(p_l,t_{j1})=w_{lj}\otimes \phi_{pt}(p_l,t_{j1}).$ 
	Hence, the \emph{wPCL} formula 
	$$ \zeta_j =  \varphi_{pt}(p_1,t_{j1})  \oplus  \varphi_{pt}(p_2,t_{j1})$$
	for $j \in \{1,2,3\}$, describes the weighted interactions of the topic $T_j$ with the publishers $P_1$ and $P_2$
	and thus 
	$$ \zeta_{t_j} = \sim( \varphi_{pt}(p_1,t_{j1})  \oplus  \varphi_{pt}(p_2,t_{j1}))$$
	refers to the weighted part of the Publish/Subscribe architecture among the concrete topic $T_j$ and the publishers $P_1$ and $P_2$.
	
\noindent	Next we build \emph{wPCL} formulas for the description of the interactions among subscribers and topics. More precisely, for every $i\in \{1,2,3,4\}$ and $j\in \{1,2,3\}$, the \emph{PIL} formula $\phi_{st} (s_i, t_{j2}) = m_{\{s_i, t_{j2}\}} $ 
	describes the interaction between the subscriber $S_i$ and the topic $T_j$, and the \emph{wPIL} formula $ \varphi_{st} (s_i, t_{j2}) = k_{ij} \otimes \phi_{st} (s_i, t_{j2})  $ 
	formalizes this interaction with its corresponding weight. Then, the \emph{wPCL} formula 
		\begin{align*}
		\zeta_{s_i} & = \sim(\varphi_{st}(s_i,t_{12})  \oplus  \varphi_{st}(s_i,t_{22})  \oplus \varphi_{st}(s_i,t_{32}) ) 
			\end{align*}
 characterizes the weighted interactions of subscriber $S_i$ with topics $T_1$, $T_2$, and $T_3$.
	
\noindent Finally, for every $i \in \{1,2,3,4\}$, we consider the \emph{wPCL} formula	
	\begin{align*}
	\zeta^{(tp)}_{s_i} & = \sim(( \varphi_{st}(s_i,t_{12}) \uplus \zeta_1) \oplus ( \varphi_{st}(s_i,t_{22}) \uplus \zeta_2)   \oplus (\varphi_{st}(s_i,t_{32}) \uplus \zeta_3)) 
		\end{align*}
	which describes the behavior of subscriber $S_i$ with publishers $P_1, P_2$ and topics $T_1,T_2,T_3$.
	
\noindent	For instance, let us consider the subscriber $S_1$, the set of interactions
	\begin{align*}
	\gamma  = & \{ \{p_1, t_{11}\},\{p_2, t_{11}\},\{p_1, t_{21}\}, \{p_2, t_{21}\}, \{p_1, t_{31}\},\{p_2, t_{31}\}, \{s_1, t_{12}\}, \{s_1, t_{22}\}, \{s_1, t_{32}\}\}
	\end{align*}
	and $K=\mathbb{R}_{\max}$. Then, the value $	\left\Vert \zeta^{(tp)}_{s_1} \right\Vert(\gamma) $
	represents the maximum priority with which the subscriber $S_1$ will receive a message. 	
	Similarly, for $K=\mathbb{R}_{\min}$,  the value $ \left\Vert \zeta^{(tp)}_{s_1} \right\Vert(\gamma) $
 corresponds to the minimum priority with which $S_1$ receives a message, whereas 
if  $K$ is the Viterbi semiring, it assigns the maximum probability for the receivement of the topics by $S_1$. 
\end{example}

\hide{
As it is already mentioned (cf. \cite{Ma:Co}), configuration logic has been
developed as a fundamental platform to describe software architectures. In the
next example, we show that wPCL in fact can formulate other
types of problems.

\begin{example}
\label{tsp_ex_}We consider the travelling salesman problem for 5 cities
$C_{1},C_{2},C_{3},C_{4},C_{5}$, and assume $C_{1}$ to be the origin city. We
aim to construct a \emph{wPCL} formula, whose semantics computes the shortest
distance of all the routes that visit every city exactly once and return to
the origin city. We consider a port $c_{i}$\ for every city $C_{i}$ ($1\leq
i\leq5$), hence $P=\{c_{i}\mid1\leq i\leq5\}$. For every $1\leq i \neq j\leq
5$ we define the monomials 
$\phi_{i,j}$ over $P$ by $
\phi_{i,j}=m_{\{c_{i},c_{j}\}}$. 
The interaction formulas $\phi_{i,j}$ represent the connection between the
cities $C_{i}$ and $C_{j}$. It should be clear that $\phi_{i,j}=\phi_{j,i}$
for every $1\leq i\not =j\leq5$. Assume that $K=%
\mathbb{R}
_{\min}$ and for every $1\leq i\not =j\leq5$ we consider the weighted
interaction formula $\varphi_{i,j}=k_{i,j}\otimes\phi_{i,j}$
with $k_{i,j}\in
\mathbb{R}
_{+}$, where the values $k_{i,j}$ represent the distance between
the cities $C_{i}$ and $C_{j}$. We let  $I=\{123451,123541,124531,125431,125341,124351,132451,132541,135241,134251,142351,  143251\}$ where the elements in I characterize all the possible "successful" paths. For instance, $123451$ describes the path $C_1 \rightarrow C_2 \rightarrow C_3 \rightarrow C_4 \rightarrow C_5 \rightarrow C_1$. We define the \emph{wPCL} formula
$\zeta\in PCL(
\mathbb{R}
_{\min},P)$ as follows:
\[
\zeta\equiv{\sim}\underset{i_{1}\ldots i_{6}\in I}{\bigoplus}%
\underset{1\leq j\leq 5}{\biguplus}\varphi_{i_{j},i_{j+1}}.%
\]
Then for $\gamma=\{\{c_{i},c_{j}\}\mid1\leq i\neq j\leq5\}$, it is not
difficult to see that the value $\Vert\zeta\Vert(\gamma)$ is the shortest
distance of all the routes starting at $C_{1}$, visit every city exactly once,
and return to $C_{1}$.

A \emph{wPCL} formula can be constructed for the travelling salesman problem
for any number $n$ of cities. Indeed, assume the cities $C_{1},\ldots,C_{n}$
with origin $C_{1}$. By preserving the above notations, we consider, for every
$1\leq i\not =j\leq n$, the interaction formula $\phi_{i,j}=m_{\{c_{i},c_{j}\}}$, and the weighted interaction formula $\varphi_{i,j}=k_{i,j}\otimes\phi_{i,j}$
with $k_{i,j}\in
\mathbb{R}
_{+}$, where the value $k_{i,j}$ represents the distance between
the cities $C_{i}$ and $C_{j}$. The required \emph{wPCL} formula $\zeta\in
PCL(
\mathbb{R}
_{\min},P)$ is determined now as follows:%
\[
\zeta\equiv{\sim}\underset{\{i_{1},\ldots,i_{n}\}\in \mathcal{CS}_{n}}{\bigoplus}%
\underset{1\leq j\leq n-1}{\biguplus}\varphi_{i_{j},i_{j+1}}%
\]
where $\mathcal{CS}_{n}$ denotes the set of all cyclic permutations of the first $n$ positive integers such that clock-wise and anti-clock-wise cyclic permutations
have been identified. It should be noted that $\mathrm{card}(\mathcal{CS}_{n})=(n-1)!/2$.
Then for $\gamma\in C(P)$ defined similarly as above, i.e., $\gamma
=\{\{c_{i},c_{j}\}\mid1\leq i\neq j\leq n\}$, the value $\Vert\zeta
\Vert(\gamma)$ is the shortest distance of all the routes starting at $C_{1}$,
visit every city exactly once, and return to $C_{1}$.
\end{example}}

\section{A full normal form for wPCL formulas}

In the present section, we show that for every wPCL formula $\zeta$ 
over $P$ and $K$ we can effectively compute an equivalent formula of a
special form. For this, we will use a corresponding result from \cite{Ma:Co}, namely
for every PCL formula $f$ over $P$\ we can effectively construct a unique equivalent PCL formula of
the form $\mathrm{true}$\footnote{This trivial case is not considered in \cite{Ma:Co} but we refer to this in order to show the correspondence with wPCL formulas $k$ for $k \in K$.} or $\bigsqcup\nolimits_{i\in I}\sum\nolimits_{i\in J_{i}}m_{i,j}$ which
is called \emph{full normal form }(cf. Theorem 4.43. in \cite{Ma:Co}). The
index sets $I$ and $J_{i}$, for every $i\in I$, are finite and $m_{i,j}$'s are
\emph{full monomials}, i.e., monomials of the form $\bigwedge_{p\in P_{+}}p\wedge
\bigwedge_{p\in P\_}\overline{p}$ with $P_{+}\cup P\_=P$ and $P_{+}\cap
P\_=\emptyset$. Here we firstly prove that the worst case run time for the construction of the full normal form  is doubly exponential, whereas the best case is exponential. Then we show that we can effectively build  a unique full
normal form for every wPCL formula $\zeta\in PCL(K,P)$ . Uniqueness is up to the
equivalence relation, and interestingly our algorithm requires the same time complexity as the corresponding one in the boolean case. Then we will use our translation result to show that the equivalence problem for wPCL formulas is decidable.

We start with the subsequent theorem which states the complexity of the construction of full normal forms for PCL formulas.  
\begin{theorem} \label{PCL-compl}
Let $P$ be a set of ports. Then, for every \emph{PCL} formula $f$ over $P$ we can effectively construct, in doubly exponential time, an equivalent \emph{PCL} formula $f'$ in full normal form. The best run time for the construction of $f'$ is exponential. Furthermore, $f'$ is unique up to equivalence relation. 
\end{theorem}

\begin{proof}
The construction of $f'$ and its uniqueness has been proved in Theorem 4.43. in \cite{Ma:Co}. Therefore, we only deal with the complexity result.  For this we recall, from \cite{Ma:Co}, the transformations  applied to $f$ for the construction of $f'$. More precisely, firstly the following rewriting rules are applied for the construction of a PCL formula $f''$ in normal form which is equivalent to $f$:

$$
\infer{\underset{i \in I}{\bigsqcup}g \wedge f_i} {g \wedge \underset{i \in I}{\bigsqcup}  f_i} \qquad \qquad
\infer{\underset{i \in I}{\bigsqcup}f_i + g} {g + \underset{i \in I}{\bigsqcup}f_i}  \qquad \qquad
\infer{\underset{i \in I}{\bigwedge}\lnot f_i }{\lnot \underset{i\in I}{\bigsqcup}f_i}  
$$

$$
\infer{\underset{\phi \in \Phi}{\bigsqcup}\overline{\phi} \ \sqcup \sim \left(\underset{\phi \in \Phi}{\bigwedge}\overline{\phi} \right)}{\lnot \underset{\phi \in \Phi}{\sum}\phi, \quad \text{all } \phi \text{ are interaction formulas}  }
$$

$$
\infer{\underset{\xi \in \Phi \cup \Psi} {\sum} \left(\xi \wedge \underset{(\phi,\psi)  \in  \Phi \times \Psi}{\bigvee}(\phi \wedge \psi) \right) }{\underset{\phi \in \Phi}{\sum}\phi \wedge \underset{\psi \in \Psi}{\sum} \psi, \quad \text{all } \phi\in \Phi \text{ and } \psi \in \Psi \text{ are interaction formulas}} 
$$

\noindent Then, every non-full monomial $m$ in $f''$ is transformed into a disjunction of  full monomials. This is done in exponential time, since all the subsets of the set $\overline{P}_{m}=P \setminus \{p \in P \mid  p \text{ or } \overline{p} \text{ occurs in } m\}$ are computed. Then using the rewriting rule 
$$
\infer{\underset{\emptyset \neq J \subseteq I}{\bigsqcup}\underset{j\in J}{\sum}f_i}{\underset{i\in I}{\bigvee}f_j}
$$

\noindent every disjunction of full monomials is transformed into a union of coalesced full monomials. Trivially, the application of the last rewriting rule on the derived disjunctions of full monomials needs again exponential time. Therefore, the application of the above rewriting rules requires at most doubly exponential time. This implies that the worst case run time for the computation of $f'$ is doubly exponential. On the other hand, if every monomial $m$ in $f''$ above is already full, then obviously the best run time for the computation of $f'$ is exponential, and this concludes our proof.
\end{proof}

Next we introduce the notion of full normal form for wPCL formulas.  
\begin{definition}
\label{def_fnf}A \emph{wPCL} formula $\zeta\in PCL(K,P)$ is said to be in
\emph{full normal form }if either 
\begin{itemize}
\item $\zeta=k$ with $k \in K$, or 
\item there are finite index sets $I$ and $J_{i}$ for
every $i\in I$, $k_{i}\in K$ for every $i\in I$, and full monomials $m_{i,j}$
for every $i\in I$ and $j\in J_{i}$ such that $\zeta=\bigoplus_{i\in I}\left(
k_{i}\otimes\sum_{j\in J_{i}}m_{i,j}\right)  $.
\end{itemize}
\end{definition}

By our definition above, for every full normal form we can construct an
equivalent one satisfying the following statements:

\begin{itemize}
\item[i)] $j\neq j^{\prime}$ implies $m_{i,j}\not \equiv m_{i,j^{\prime}}%
$\ for every $i\in I$, $j,j^{\prime}\in J_{i}$, and

\item[ii)] $i\neq i^{\prime}$ implies $\sum_{j\in J_{i}}m_{i,j}\not \equiv
\sum_{j\in J_{i^{\prime}}}m_{i^{\prime},j}$\ for every $i,i^{\prime}\in I$.
\end{itemize}

\noindent Indeed, for the first one, if $m_{i,j}\equiv m_{i,j^{\prime}}$\ for
some $j\neq j^{\prime}$, then\ since $m_{i,j},m_{i,j^{\prime}}$\ are
interaction formulas, by Proposition \ref{conj-inter}(i), we can replace the
coalescing $m_{i,j}+m_{i,j^{\prime}}$ with $m_{i,j}$. For (ii), let us assume
that $\sum_{j\in J_{i}}m_{i,j}\equiv\sum_{j\in J_{i^{\prime}}}m_{i^{\prime}%
,j}$\ for some $i\neq i^{\prime}$. Then, we can replace the sum $\left(
k_{i}\otimes\sum_{j\in J_{i}}m_{i,j}\right)  \oplus\left(  k_{i^{\prime}%
}\otimes\sum_{j\in J_{i^{\prime}}}m_{i^{\prime},j}\right)  $\ with the
equivalent one $\left(  k_{i}\oplus k_{i^{\prime}}\right)  \otimes\sum_{j\in
J_{i}}m_{i,j}$. Hence, in the sequel, we assume that every full normal form
satisfies Statements (i) and (ii).

We intend to show that for every wPCL formula $\zeta\in
PCL(K,P)$ we can effectively construct an equivalent wPCL
formula $\zeta^{\prime}\in PCL(K,P)$ in full normal form. Moreover,
$\zeta^{\prime}$ will be unique up to the equivalence relation\footnote{Since wPCL formulas are defined syntactically, we get for instance $\zeta_1 \uplus \zeta_2 \neq \zeta_2 \uplus \zeta_1$ whereas $\zeta_1 \uplus \zeta_2 \equiv \zeta_2 \uplus \zeta_1$.}. We shall need
a sequence of preliminary results. All index sets occurring in the
sequel are finite.

\begin{lemma}
\label{fullno3} Let $k_{1},k_{2}\in K$ and $\zeta_{1},\zeta_{2}\in PCL(K,P)$.
Then
\[
\left(  k_{1}\otimes\zeta_{1}\right)  \uplus\left(  k_{2}\otimes\zeta
_{2}\right)  \equiv\left(  k_{1}\otimes k_{2}\right)  \otimes\left(  \zeta
_{1}\uplus\zeta_{2}\right)  .
\]

\end{lemma}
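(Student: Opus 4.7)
The plan is to verify the claimed equivalence by a direct evaluation of the semantics at an arbitrary $\gamma \in C(P)$, using nothing beyond Definition \ref{wPCL_sem} together with the semiring axioms of $K$ (in particular, the commutativity of $\otimes$ and its distributivity over $\oplus$).

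First I would unfold the left-hand side. By the clause for $\uplus$ in Definition \ref{wPCL_sem},
\[
\left\Vert (k_{1}\otimes\zeta_{1})\uplus(k_{2}\otimes\zeta_{2})\right\Vert(\gamma)
=\bigoplus_{\gamma=\gamma_{1}\cup\gamma_{2}}\bigl(\left\Vert k_{1}\otimes\zeta_{1}\right\Vert(\gamma_{1})\otimes\left\Vert k_{2}\otimes\zeta_{2}\right\Vert(\gamma_{2})\bigr),
\]
and then I would apply the clauses for $\otimes$ and for the constant $k$ to rewrite each summand as $(k_{1}\otimes\left\Vert\zeta_{1}\right\Vert(\gamma_{1}))\otimes(k_{2}\otimes\left\Vert\zeta_{2}\right\Vert(\gamma_{2}))$.

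Next, the main algebraic step: since $K$ is a commutative semiring, I can rearrange each product inside the big sum as $(k_{1}\otimes k_{2})\otimes\bigl(\left\Vert\zeta_{1}\right\Vert(\gamma_{1})\otimes\left\Vert\zeta_{2}\right\Vert(\gamma_{2})\bigr)$. Then I pull the constant $k_{1}\otimes k_{2}$ outside the direct sum using left-distributivity of $\otimes$ over $\oplus$, obtaining
\[
(k_{1}\otimes k_{2})\otimes\bigoplus_{\gamma=\gamma_{1}\cup\gamma_{2}}\bigl(\left\Vert\zeta_{1}\right\Vert(\gamma_{1})\otimes\left\Vert\zeta_{2}\right\Vert(\gamma_{2})\bigr).
\]
Folding the inner sum back by the semantic clause for $\uplus$ and then by the clauses for $\otimes$ and for the constant yields $\left\Vert(k_{1}\otimes k_{2})\otimes(\zeta_{1}\uplus\zeta_{2})\right\Vert(\gamma)$, which is the right-hand side.

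This calculation is entirely routine; there is no real obstacle. The only place where I have to be careful is the use of commutativity of $\otimes$ in $K$ when I move $k_{2}$ past $\left\Vert\zeta_{1}\right\Vert(\gamma_{1})$ to collect the scalars $k_{1}\otimes k_{2}$ together — this is exactly the assumption, fixed globally in the paper, that $K$ is commutative.
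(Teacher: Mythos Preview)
Your proposal is correct and follows essentially the same approach as the paper's proof: both unfold the semantics of $\uplus$ at an arbitrary $\gamma$, apply the clauses for $\otimes$ and constants, use commutativity of $\otimes$ to collect $k_{1}\otimes k_{2}$, factor it out by distributivity, and refold. The paper presents this as a single chain of equalities, exactly as you outline.
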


\begin{proof}
For every $\gamma\in C(P)$ we have%
\begin{align*}
\left\Vert \left(  k_{1}\otimes\zeta_{1}\right)  \uplus\left(  k_{2}%
\otimes\zeta_{2}\right)  \right\Vert (\gamma)  &  =\bigoplus_{\gamma
=\gamma_{1}\cupdot \gamma_{2}}\left(  \left\Vert k_{1}\otimes\zeta_{1}\right\Vert
(\gamma_{1})\otimes\left\Vert k_{2}\otimes\zeta_{2}\right\Vert (\gamma
_{2})\right) \\
&  =\bigoplus_{\gamma=\gamma_{1}\cupdot \gamma_{2}}\left(  \left(  k_{1}%
\otimes\left\Vert \zeta_{1}\right\Vert (\gamma_{1})\right)  \otimes\left(
k_{2}\otimes\left\Vert \zeta_{2}\right\Vert (\gamma_{2})\right)  \right) \\
&  \equiv \bigoplus_{\gamma=\gamma_{1}\cupdot \gamma_{2}}\left(  k_{1}\otimes
k_{2}\otimes\left\Vert \zeta_{1}\right\Vert (\gamma_{1})\otimes\left\Vert
\zeta_{2}\right\Vert (\gamma_{2})\right) \\
&  \equiv \left(  k_{1}\otimes k_{2}\right)  \otimes \left( \bigoplus_{\gamma=\gamma_{1}%
\cupdot \gamma_{2}}\left(  \left\Vert \zeta_{1}\right\Vert (\gamma_{1}%
)\otimes\left\Vert \zeta_{2}\right\Vert (\gamma_{2})\right) \right) \\
&  =\left(  k_{1}\otimes k_{2}\right)  \otimes\left\Vert \zeta_{1}\uplus
\zeta_{2}\right\Vert (\gamma)\\
&  =\left\Vert \left(  k_{1}\otimes k_{2}\right)  \otimes\left(  \zeta
_{1}\uplus\zeta_{2}\right)  \right\Vert (\gamma),
\end{align*}
where the first equivalence holds by commutativity of $K$. 
\end{proof}

\begin{lemma}
\label{fullno5}Let $J$ be an index set and $m_{j}$ a full monomial for every
$j\in J$. Then, there exists a unique $\overline{\gamma}\in C(P)$ such that
for every $\gamma\in C(P)$ we have $\left\Vert \sum\nolimits_{j\in J}%
m_{j}\right\Vert (\gamma)=1$ if $\gamma=\overline{\gamma}$ and $\left\Vert
\sum\nolimits_{j\in J}m_{j}\right\Vert (\gamma)=0$ otherwise.
\end{lemma}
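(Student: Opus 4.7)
The plan is to identify each full monomial with the unique interaction it characterizes, and then unfold the semantics of the iterated coalescing operator. Recall that any full monomial $m_j$ has the form $\bigwedge_{p \in P_+} p \wedge \bigwedge_{p \in P_-} \overline{p}$ with $P_+ \cup P_- = P$ and $P_+ \cap P_- = \emptyset$. Hence $m_j$ coincides with the characteristic monomial $m_{a_j}$ of a unique interaction $a_j \in I(P)$ (namely $a_j = P_+$), so that for every interaction $a \in I(P)$ one has $a \models_i m_j$ iff $a = a_j$.

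Next I would argue that, for every $\gamma \in C(P)$, $\gamma \models m_j$ iff $\gamma = \{a_j\}$. One direction is clear: $\{a_j\} \models m_j$ since $a_j \models_i m_j$. Conversely, if $\gamma \models m_j$, then every $a \in \gamma$ satisfies $a \models_i m_j$, which forces $a = a_j$; since $\gamma \in C(P)$ is nonempty, this gives $\gamma = \{a_j\}$.

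Now I would proceed by induction on the cardinality of $J$ to show that $\gamma \models \sum_{j \in J} m_j$ iff $\gamma = \{a_j \mid j \in J\}$. The base case $|J| = 1$ is exactly the preceding paragraph. For the inductive step, write $\sum_{j \in J \cup \{j_0\}} m_j \equiv \bigl(\sum_{j \in J} m_j\bigr) + m_{j_0}$ (using associativity and commutativity of coalescing, Proposition \ref{col_prop}(ii)). By the semantics of $+$, $\gamma \models \bigl(\sum_{j \in J} m_j\bigr) + m_{j_0}$ iff there exist $\gamma_1, \gamma_2 \in C(P)$ with $\gamma = \gamma_1 \cup \gamma_2$, $\gamma_1 \models \sum_{j \in J} m_j$, and $\gamma_2 \models m_{j_0}$. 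By the induction hypothesis and the base step this is equivalent to $\gamma_1 = \{a_j \mid j \in J\}$ and $\gamma_2 = \{a_{j_0}\}$, i.e., $\gamma = \{a_j \mid j \in J \cup \{j_0\}\}$, completing the induction.

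Finally, I would set $\overline{\gamma} = \{a_j \mid j \in J\} \in C(P)$; this is nonempty because $J$ is nonempty, and it is uniquely determined by the $m_j$. Translating back through the weighted semantics of PCL formulas (Definition \ref{wPCL_sem}, second clause), $\bigl\|\sum_{j \in J} m_j\bigr\|(\gamma) = 1$ precisely when $\gamma = \overline{\gamma}$ and equals $0$ otherwise, as required. There is no substantive obstacle here; the only point requiring slight care is the nonemptiness constraint inherent in $C(P)$ when reading off $\gamma_j = \{a_j\}$ from $\gamma_j \models m_j$, which is precisely why each full monomial pins down a singleton configuration.
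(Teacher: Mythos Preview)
Your argument is correct and follows the same approach as the paper: identify for each full monomial $m_j$ the unique interaction $a_j$ it characterizes, and take $\overline{\gamma}=\{a_j\mid j\in J\}$. The paper dismisses the verification as ``straightforward'' without further detail, whereas you spell it out via an induction on $|J|$; this is a perfectly acceptable (and arguably more careful) way to justify what the paper leaves implicit.
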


\begin{proof}
For every $m_{j}$, $j\in J$, there exists a unique interaction $a_{j}$ such
that $a_{j}\models_{i}m_{j}$. Then, it is straightforward to show that
$\overline{\gamma}=\{a_{j}\mid j\in J\}$ satisfies our claim.
\end{proof}

\begin{proposition}
\label{fullno6} Let $f$ be a $PCL$ formula over $P$. Then there exist finite
index sets $I$ and $J_{i}$ for every $i\in I$, and full monomials $m_{i,j}$
for every $i\in I$ and $j\in J_{i}$ such that
\[
f\equiv\bigoplus_{i\in I}\sum_{j\in J_{i}}m_{i,j}\equiv\bigoplus_{i\in
I}\left(  1\otimes\sum_{j\in J_{i}}m_{i,j}\right)  .
\]
\hide{
In particular
\[
\mathrm{true}\equiv\bigoplus_{\emptyset\not =N\subseteq M}\sum_{m\in N}m
\]
where $M$ is the set of all full monomials over $P$ such that for every
$m,m^{\prime}\in M$, if $m\neq m^{\prime}$, then $m\not \equiv m^{\prime}$.}
\end{proposition}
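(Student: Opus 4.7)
The plan is to bootstrap the Boolean full normal form of \cite{Ma:Co} into the weighted setting, using the fact that the semantics of a $PCL$ formula, viewed as a weighted formula, takes only the values $0$ and $1$, and the observation that nothing more than rewriting $\sqcup$ as $\oplus$ is needed once we know that the disjuncts are ``disjoint'' in the sense that each $\gamma$ makes at most one of them evaluate to $1$.

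First, I would invoke Theorem 4.43 of \cite{Ma:Co}: there exist finite index sets $I$ and $J_i$ ($i\in I$) and full monomials $m_{i,j}$ such that, in the Boolean sense, $f\equiv \bigsqcup_{i\in I}\sum_{j\in J_i}m_{i,j}$, and we may further assume (by Statement (ii) following Definition \ref{def_fnf}) that the sums $\sum_{j\in J_i}m_{i,j}$ are pairwise non-equivalent. Next, apply Lemma \ref{fullno5} to each sum $\sum_{j\in J_i}m_{i,j}$ to obtain a unique $\gamma_i\in C(P)$ on which it evaluates to $1$; the non-equivalence of the sums forces the $\gamma_i$'s to be pairwise distinct.

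Now I would fix an arbitrary $\gamma\in C(P)$ and compute
\[
\left\Vert \bigoplus_{i\in I}\sum_{j\in J_i}m_{i,j}\right\Vert(\gamma)=\bigoplus_{i\in I}\left\Vert \sum_{j\in J_i}m_{i,j}\right\Vert(\gamma).
\]
By the previous paragraph, at most one term of this sum equals $1$ and all others equal $0$, so the total is $1$ precisely when $\gamma=\gamma_i$ for some $i$, and $0$ otherwise. On the other hand, by the Boolean equivalence and the clauses for $\sqcup$ and $+$, $\gamma\models f$ iff $\gamma\models \sum_{j\in J_i}m_{i,j}$ for some $i$, i.e.\ iff $\gamma=\gamma_i$ for some $i$. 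Hence $\left\Vert \bigoplus_{i\in I}\sum_{j\in J_i}m_{i,j}\right\Vert(\gamma)=\Vert f\Vert(\gamma)$, proving the first equivalence. The second equivalence is immediate since $1$ is the unit of $\otimes$.

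For the ``in particular'' statement, I would apply the first part to $f=true$, which evaluates to $1$ on every $\gamma\in C(P)$. Since full monomials with non-empty positive part are, by Lemma \ref{fullno5} (and the remark that $m_a$ is the unique full monomial satisfied by $a$), in bijection with $I(P)$ via $m\mapsto a_m$, the map $N\mapsto\{a_m\mid m\in N\}$ is a bijection between non-empty subsets of $M$ and elements of $C(P)=\mathcal{P}(I(P))\setminus\{\emptyset\}$. Thus for every $\gamma\in C(P)$ exactly one term in $\bigoplus_{\emptyset\neq N\subseteq M}\sum_{m\in N}m$ contributes $1$, which matches $\Vert true\Vert(\gamma)=1$. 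The only delicate point, which I expect to be the main obstacle to state cleanly, is the disjointness argument in the second paragraph: it requires combining Statement (ii) after Definition \ref{def_fnf} with Lemma \ref{fullno5} to conclude that the $\gamma_i$'s are distinct; without this, the $\oplus$ over $I$ might produce values different from $1$ in an additively non-idempotent semiring.
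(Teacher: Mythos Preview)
Your proposal is correct and follows essentially the same route as the paper: both invoke the Boolean full normal form from \cite{Ma:Co}, reduce to pairwise non-equivalent coalescings via Statement~(ii), apply Lemma~\ref{fullno5} to pin each $\sum_{j\in J_i}m_{i,j}$ to a unique $\gamma_i$, and then use the resulting disjointness to replace $\sqcup$ by $\oplus$ without changing the $\{0,1\}$-valued semantics. Your explicit isolation of the disjointness step (distinct $\gamma_i$'s, so that at most one summand contributes $1$) is exactly the point the paper uses when it writes ``where the last but one equality holds by Statement~(ii)''; your treatment of the \emph{true} case via the bijection $N\mapsto\{a_m\mid m\in N\}$ is likewise the same argument the paper gives.
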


\begin{proof}
By Theorem 4.43. in \cite{Ma:Co} there exists a unique full normal form such that
$f\equiv\bigsqcup_{i\in I}\sum_{j\in J_{i}}m_{i,j}$ where $m_{i,j}$ are full
monomials. Using similar arguments as the ones after Definition \ref{def_fnf}
we can assume, without any loss, that the full normal form satisfies
Statements (i) and (ii).$\ $By Lemma \ref{fullno5}, for every $i\in I$ there
exists a unique $\overline{\gamma}_{i}\in C(P)$ such that for every $\gamma\in
C(P)$ we have $\left\Vert \sum_{j\in J_{i}}m_{i,j}\right\Vert (\gamma)=1$ if
$\gamma=\overline{\gamma}_{i}$ and $\left\Vert \sum_{j\in J_{i}}%
m_{i,j}\right\Vert (\gamma)=0$ otherwise. Then\ we have%
\begin{align*}
\left\Vert f\right\Vert (\gamma)  &  =\left\{
\begin{array}
[c]{ll}%
1 & \textnormal{ if }\gamma\models\bigsqcup_{i\in I}\sum_{j\in J_{i}}m_{i,j}\\
0 & \textnormal{ otherwise}%
\end{array}
\right. \\
&  =\left\{
\begin{array}
[c]{ll}%
1 & \textnormal{ if }\gamma=\overline{\gamma}_{i}\text{ for some }i\in I\\
0 & \textnormal{ otherwise}%
\end{array}
\right. \\
&  =\left\{
\begin{array}
[c]{ll}%
1 & \textnormal{ if }\left\Vert \sum_{j\in J_{i}}m_{i,j}\right\Vert (\gamma)=1\text{
for some }i\in I\\
0 & \textnormal{ otherwise}%
\end{array}
\right. \\
&  =\left\{
\begin{array}
[c]{ll}%
1 & \textnormal{ if }\left\Vert \bigoplus_{i\in I}\sum_{j\in J_{i}}m_{i,j}\right\Vert
(\gamma)=1\text{ }\\
0 & \textnormal{ otherwise}%
\end{array}
\right. \\
&  =\left\{
\begin{array}
[c]{ll}%
1 & \textnormal{ if }\left\Vert \bigoplus_{i\in I}\left(  1\otimes\sum_{j\in J_{i}%
}m_{i,j}\right)  \right\Vert (\gamma)=1\text{ }\\
0 & \textnormal{ otherwise}%
\end{array}
\right.
\end{align*}
for\ every $\gamma\in C(P)$, where the last but one equality holds by
Statement (ii). Hence we get $f\equiv\bigoplus_{i\in I}\sum_{j\in J_{i}%
}m_{i,j}\equiv\bigoplus_{i\in I}\left(  1\otimes\sum_{j\in J_{i}}%
m_{i,j}\right)  $ and we are done.
\hide{Next, we trivially get $\mathrm{true}\equiv\bigsqcup_{\emptyset\not =N\subseteq M}%
\sum_{m\in N}m$. Moreover, for every $m\in M$\ there exists a unique
interaction $a_{m}\in I(P)$ satisfying $m$, hence for every nonempty set
$N\subseteq M$\ there exists a unique $\gamma_{N}=\{a_{m}\mid m\in N\}\in
C(P)$ satisfying $\sum_{m\in N}m$. This implies that $\bigsqcup_{\emptyset
\not =N\subseteq M}\sum_{m\in N}m\equiv\bigoplus_{\emptyset\not =N\subseteq
M}\sum_{m\in N}m$, and we are done.}
\end{proof}

\begin{lemma}
\label{fullno8} Let $m_{i},m_{j}^{\prime}$ be full monomials for every $i\in
I$ and $j\in J$. Then
\[
\left(  \underset{i\in I}{\sum}m_{i}\right)  \otimes\left(  \underset{j\in
J}{\sum}m_{j}^{\prime}\right)  \equiv\left\{
\begin{array}
[c]{ll}%
\underset{i\in I}{\sum}m_{i} & \textnormal{ if }\underset{i\in I}{\sum}m_{i}%
\equiv\underset{j\in J}{\sum}m_{j}^{\prime}\\
0 & \textnormal{ otherwise.}%
\end{array}
\right.
\]

\end{lemma}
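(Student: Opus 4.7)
The plan is to invoke Lemma \ref{fullno5} to reduce the statement to a trivial pointwise computation of the Hadamard product.

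First, I would apply Lemma \ref{fullno5} to each of the two sums separately: there exist unique $\overline{\gamma}_1, \overline{\gamma}_2 \in C(P)$ such that $\left\Vert \sum_{i \in I} m_i \right\Vert$ is the characteristic polynomial of $\{\overline{\gamma}_1\}$ (taking value $1$ at $\overline{\gamma}_1$ and $0$ elsewhere), and similarly $\left\Vert \sum_{j \in J} m_j' \right\Vert$ is the characteristic polynomial of $\{\overline{\gamma}_2\}$. Since each of these polynomials is completely determined by its unique ``support point,'' we have the equivalence $\sum_{i \in I} m_i \equiv \sum_{j \in J} m_j'$ if and only if $\overline{\gamma}_1 = \overline{\gamma}_2$.

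Next I would split into the two cases of the statement. In the case $\sum_{i \in I} m_i \equiv \sum_{j \in J} m_j'$, we have $\overline{\gamma}_1 = \overline{\gamma}_2 =: \overline{\gamma}$. For every $\gamma \in C(P)$, by the semantics of $\otimes$,
\[
\left\Vert \left(\sum_{i \in I} m_i\right) \otimes \left(\sum_{j \in J} m_j'\right) \right\Vert(\gamma) = \left\Vert \sum_{i \in I} m_i\right\Vert(\gamma) \otimes \left\Vert \sum_{j \in J} m_j'\right\Vert(\gamma),
\]
which equals $1 \otimes 1 = 1$ if $\gamma = \overline{\gamma}$ and $0$ otherwise, matching $\left\Vert \sum_{i \in I} m_i \right\Vert(\gamma)$. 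In the case $\sum_{i \in I} m_i \not\equiv \sum_{j \in J} m_j'$, we have $\overline{\gamma}_1 \neq \overline{\gamma}_2$, so at each $\gamma \in C(P)$ at least one of the two factors is $0$, and the Hadamard product is identically $0 = \left\Vert 0 \right\Vert(\gamma)$.

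There is no substantial obstacle here; the whole argument rests on Lemma \ref{fullno5}, which pins the semantics of each full-monomial coalescing to a single configuration, after which the Hadamard product becomes a simple case split on whether the two support points coincide.
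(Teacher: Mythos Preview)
Your proposal is correct and follows essentially the same approach as the paper: invoke Lemma~\ref{fullno5} to identify the unique support configurations of the two coalescings, then compute the Hadamard product pointwise and split on whether those configurations coincide. The paper presents the case split as a single piecewise display rather than two paragraphs, but the argument is identical.
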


\begin{proof}
By Lemma \ref{fullno5} there exist $\overline{\gamma},\overline{\gamma
}^{\prime}\in C(P)$ such that for every $\gamma\in C(P)$ we have $\left\Vert
\sum\nolimits_{i\in I}m_{i}\right\Vert(\gamma) =1$ if $\gamma=\overline
{\gamma}$ and $\left\Vert \sum\nolimits_{i\in I}m_{i}\right\Vert (\gamma)=0$
otherwise, and $\left\Vert \sum\nolimits_{j\in J}m_{j}^{\prime}\right\Vert
(\gamma)=1$ if $\gamma=\overline{\gamma}^{\prime}$ and $\left\Vert
\sum\nolimits_{j\in J}m_{j}^{\prime}\right\Vert (\gamma)=0$ otherwise.
Therefore, for every $\gamma\in C(P)$ we get
\begin{align*}
\left\Vert \left(  \underset{i\in I}{\sum}m_{i}\right)  \otimes\left(
\underset{j\in J}{\sum}m_{j}^{\prime}\right)  \right\Vert (\gamma)  &
=\left\Vert \underset{i\in I}{\sum}m_{i}\right\Vert (\gamma)\otimes\left\Vert
\underset{j\in J}{\sum}m_{j}^{\prime}\right\Vert (\gamma)\\
&  =\left\{
\begin{array}
[c]{ll}%
1\otimes1 & \textnormal{ if }\underset{i\in I}{\sum}m_{i}\equiv\underset{j\in J}%
{\sum}m_{j}^{\prime}\text{ and }\gamma=\overline{\gamma}=\overline{\gamma
}^{\prime}\\
0 & \textnormal{ otherwise}%
\end{array}
\right. \\
&  =\left\{
\begin{array}
[c]{ll}%
1 & \textnormal{ if }\underset{i\in I}{\sum}m_{i}\equiv\underset{j\in J}{\sum}%
m_{j}^{\prime}\text{ and }\gamma=\overline{\gamma}=\overline{\gamma}^{\prime
}\\
0 & \textnormal{ otherwise}%
\end{array}
\right. \\
&  =\left\{
\begin{array}
[c]{ll}%
\left\Vert \underset{i\in I}{\sum}m_{i}\right\Vert (\gamma) & \textnormal{ if
}\underset{i\in I}{\sum}m_{i}\equiv\underset{j\in J}{\sum}m_{j}^{\prime}\text{
and }\gamma=\overline{\gamma}=\overline{\gamma}^{\prime}\\
0 & \textnormal{ otherwise}%
\end{array}
\right.
\end{align*}
which concludes our claim.
\end{proof}

\begin{lemma}\label{full_coal} Let $m_{i},m_{j}^{\prime}$ be full monomials for every $i\in
I$ and $j\in J$. If $m_i \equiv m'_j$ for some $i \in I$ and $j \in J$, then 
\[
\left(  \underset{i\in I}{\sum}m_{i}\right)  \uplus \left(  \underset{j\in
J}{\sum}m_{j}^{\prime}\right)  \equiv 0.
\]
\end{lemma}
\begin{proof}
By Lemma \ref{fullno5} there exist $\overline{\gamma},\overline{\gamma
}^{\prime}\in C(P)$ such that for every $\gamma\in C(P)$ we have $\left\Vert
\sum\nolimits_{i\in I}m_{i}\right\Vert(\gamma) =1$ if $\gamma=\overline
{\gamma}$ and $\left\Vert \sum\nolimits_{i\in I}m_{i}\right\Vert (\gamma)=0$
otherwise, and $\left\Vert \sum\nolimits_{j\in J}m_{j}^{\prime}\right\Vert
(\gamma)=1$ if $\gamma=\overline{\gamma}^{\prime}$ and $\left\Vert
\sum\nolimits_{j\in J}m_{j}^{\prime}\right\Vert (\gamma)=0$ otherwise.
Therefore, if $\overline{\gamma} \cap  \overline{\gamma}' \neq \emptyset $, then by definition of the coalescing operator on wPCL formulas, we get 
$$
\left\Vert \left(  \underset{i\in I}{\sum}m_{i}\right)  \uplus \left(
\underset{j\in J}{\sum}m_{j}^{\prime}\right)  \right\Vert (\gamma) =0$$
for every $\gamma \in C(P)$, and this concludes our claim.
\end{proof}

\begin{theorem}
\label{thm_fnf}Let $K$ be a commutative semiring and $P$ a set of ports. Then,
for every \emph{wPCL} formula $\zeta\in PCL(K,P)$ we can effectively
construct an equivalent \emph{wPCL} formula $\zeta^{\prime}\in PCL(K,P)$ in
full normal form which is unique  up to the
equivalence relation. The worst case run time for the construction algorithm is doubly exponential and the best case is exponential. 
\end{theorem}
\begin{proof}
We prove our theorem by induction on the structure of wPCL
formulas $\zeta$\ over $P$ and $K$. If $\zeta =k$ with $k\in K$, then we have nothing to prove. Though, we shall need to use instead the equivalent full normal form $\zeta'=\bigoplus_{\emptyset\not =N\subseteq M}\left(
k\otimes\sum_{m\in N}m\right)$ where $M$ is the set of all full monomials over $P$ such that for every $m,m' \in M$ if $m\neq m'$, then $m \not \equiv m'$. Next let $\zeta=f$ be a PCL formula. Then, we conclude our claim by
Proposition \ref{fullno6}.

Assume now that $\zeta_{1},\zeta_{2}\in PCL(K,P)$ and let $\zeta_{1}^{\prime
}=\bigoplus_{i_{1}\in I_{1}}\left(  k_{i_{1}}\otimes\sum_{j_{1}\in J_{i_{1}}%
}m_{i_{1},j_{1}}\right)  $ and $\zeta_{2}^{\prime}=\bigoplus_{i_{2}\in I_{2}%
}\left(  k_{i_{2}}\otimes\sum_{j_{2}\in J_{i_{2}}}m_{i_{2},j_{2}}\right)  $
be their equivalent full normal forms, respectively. \\
Let firstly $\zeta=\zeta_{1}\oplus\zeta_{2}$ and assume that not both of $\zeta_1$, $\zeta_2$ are constants in $K$. We consider the formula
$\zeta_{1}^{\prime}\oplus\zeta_{2}^{\prime}$. If $\sum_{j_{1}\in J_{i_{1}}%
}m_{i_{1},j_{1}}\not \equiv \sum_{j_{2}\in J_{i_{2}}}m_{i_{2},j_{2}}$\ for
every $i_{1}\in I_{1}$\ and\ $i_{2}\in I_{2}$, then we set $\zeta^{\prime
}=\zeta_{1}^{\prime}\oplus\zeta_{2}^{\prime}$. If this is not the case, let
us assume that $\sum_{j_{1}\in J_{i_{1}^{\prime}}}m_{i_{1}^{\prime},j_{1}}%
\equiv\sum_{j_{2}\in J_{i_{2}^{\prime}}}m_{i_{2}^{\prime},j_{2}}$\ for some
$i_{1}^{\prime}\in I_{1}$\ and\ $i_{2}^{\prime}\in I_{2}$. Then we have
\begin{multline*}
\zeta_{1}^{\prime}\oplus\zeta_{2}^{\prime}\equiv\left(  \bigoplus_{i_{1}\in
I_{1}\setminus\{i_{1}^{\prime}\}}\left(  k_{i_{1}}\otimes\sum_{j_{1}\in
J_{i_{1}}}m_{i_{1},j_{1}}\right)  \right)  \oplus\left(  \bigoplus_{i_{2}\in
I_{2}\setminus\{i_{2}^{\prime}\}}\left(  k_{i_{2}}\otimes\sum_{j_{2}\in
J_{i_{2}}}m_{i_{2},j_{2}}\right)  \right) \\
\oplus\left(  (k_{i_{1}^{\prime}}\oplus k_{i_{2}^{\prime}})\otimes\sum
_{j_{1}\in J_{i_{1}^{\prime}}}m_{i_{1}^{\prime},j_{1}}\right)  .
\end{multline*}
We continue in the same way, and we conclude to a full normal form
$\zeta^{\prime}$, which, by construction, it is equivalent to $\zeta$. \\
If $\zeta=k_1$ and $\zeta_2=k_2$ with $k_1, k_2 \in K$, then we let $\zeta'=k_1 \oplus k_2$.

\noindent Next let $\zeta=\zeta_{1}\otimes\zeta_{2}$ and assume that neither $\zeta_1$ nor $\zeta_2$ is a constant in $K$. We set \newline$\xi=\bigoplus
_{i_{1}\in I_{1}}\bigoplus_{i_{2}\in I_{2}}\left(  k_{i_{1}}\otimes k_{i_{2}%
}\otimes\sum_{j_{1}\in J_{i_{1}}}m_{i_{1},j_{1}}\otimes\sum_{j_{2}\in
J_{i_{2}}}m_{i_{2},j_{2}}\right)  $ and we have%
\begin{align*}
\xi &  =\bigoplus_{i_{1}\in I_{1}}\bigoplus_{i_{2}\in I_{2}}\left(  k_{i_{1}%
}\otimes k_{i_{2}}\otimes\sum_{j_{1}\in J_{i_{1}}}m_{i_{1},j_{1}}\otimes
\sum_{j_{2}\in J_{i_{2}}}m_{i_{2},j_{2}}\right) \\
&  \equiv\bigoplus_{i_{1}\in I_{1}}\bigoplus_{i_{2}\in I_{2}}\left(  \left(
k_{i_{1}}\otimes\sum_{j_{1}\in J_{i_{1}}}m_{i_{1},j_{1}}\right)
\otimes\left(  k_{i_{2}}\otimes\sum_{j_{2}\in J_{i_{2}}}m_{i_{2},j_{2}%
}\right)  \right)  \\
&  \equiv\left(  \bigoplus_{i_{1}\in I_{1}}\left(  k_{i_{1}}\otimes\sum
_{j_{1}\in J_{i_{1}}}m_{i_{1},j_{1}}\right)  \right)  \otimes\left(
\bigoplus_{i_{2}\in I_{2}}\left(  k_{i_{2}}\otimes\sum_{j_{2}\in J_{i_{2}}%
}m_{i_{2},j_{2}}\right)  \right) \\
&  =\zeta_{1}^{\prime}\otimes\zeta_{2}^{\prime}  \equiv\zeta_{1}\otimes\zeta_{2}=\zeta
\end{align*}
where the second equivalence follows by the distributivity property
of $K$. \newline Now, we translate $\xi$ to an equivalent full normal form
$\zeta^{\prime}$\ as follows. By Lemma \ref{fullno8}, for every $i_{1}\in
I_{1}$, $i_{2}\in I_{2}$, we get $\sum_{j_{1}\in J_{i_{1}}}m_{i_{1},j_{1}%
}\otimes\sum_{j_{2}\in J_{i_{2}}}m_{i_{2},j_{2}}\equiv\sum_{j_{1}\in J_{i_{1}%
}}m_{i_{1},j_{1}}$\ if $\sum_{j_{1}\in J_{i_{1}}}m_{i_{1},j_{1}}\equiv
\sum_{j_{2}\in J_{i_{2}}}m_{i_{2},j_{2}}$, and $\sum_{j_{1}\in J_{i_{1}}%
}m_{i_{1},j_{1}}\otimes\sum_{j_{2}\in J_{i_{2}}}m_{i_{2},j_{2}}\equiv
0$\ otherwise. Hence, in the first case we replace $\sum_{j_{1}\in J_{i_{1}}%
}m_{i_{1},j_{1}}\otimes\sum_{j_{2}\in J_{i_{2}}}m_{i_{2},j_{2}}$ by
$\sum_{j_{1}\in J_{i_{1}}}m_{i_{1},j_{1}}$,\ whereas in the second case by
$0$. Obviously, $\zeta^{\prime}$ is the required full normal form. \\
If $\zeta_1=k$ (resp. $\zeta_2=k$) with $k \in K$, then we set $\zeta'= \bigoplus_{i_{2}\in I_{2}%
}\left( k\otimes k_{i_{2}}\otimes\sum_{j_{2}\in J_{i_{2}}}m_{i_{2},j_{2}}\right)  $ (resp. $\zeta'= \bigoplus_{i_{1}\in I_{1}%
}\left( k\otimes k_{i_{1}}\otimes\sum_{j_{1}\in J_{i_{1}}}m_{i_{1},j_{1}}\right)  $. If $\zeta=k_1$ and $\zeta_2=k_2$ with $k_1, k_2 \in K$, then we let $\zeta'=k_1 \otimes k_2$.  

\noindent Finally let $\zeta=\zeta_{1}\uplus\zeta_{2}$. We set \newline%
$\zeta'=\bigoplus_{i_{1}\in I_{1}}\bigoplus_{i_{2}\in I_{2}}\left(  k'_{i_{1}%
}\otimes k'_{i_{2}}\otimes\left(  \left(  \sum_{j_{1}\in J_{i_{1}}}%
m_{i_{1},j_{1}}\right)  +\left(  \sum_{j_{2}\in J_{i_{2}}}m_{i_{2},j_{2}%
}\right)  \right)  \right)  $ where $k'_{i_1}$ and $k'_{i_2}$ are defined for every $i_1 \in I_1$ and $i_2 \in I_2$ respectively, as follows. If $m_{i_1,j_1} \not\equiv m_{i_2,j_2} $ for every $j_1 \in J_{i_1}$ and $j_2 \in J_{i_2}$, then we set $k'_{i_1}=k_{i_1}$ and $k'_{i_2}=k_{i_2}$, otherwise we let $k'_{i_1}=k'_{i_2}=0$. Then we get
\begin{align*}
\zeta' &  =\bigoplus_{i_{1}\in I_{1}}\bigoplus_{i_{2}\in I_{2}}\left(  k'_{i_{1}%
}\otimes k'_{i_{2}}\otimes\left(  \left(  \sum_{j_{1}\in J_{i_{1}}}%
m_{i_{1},j_{1}}\right)  +\left(  \sum_{j_{2}\in J_{i_{2}}}m_{i_{2},j_{2}%
}\right)  \right)  \right) \\
&  \equiv\bigoplus_{i_{1}\in I_{1}}\bigoplus_{i_{2}\in I_{2}}\left(  k_{i_{1}%
}\otimes k_{i_{2}}\otimes\left(  \left(  \sum_{j_{1}\in J_{i_{1}}}%
m_{i_{1},j_{1}}\right)  \uplus\left(  \sum_{j_{2}\in J_{i_{2}}}m_{i_{2},j_{2}%
}\right)  \right)  \right) \\
&  \equiv\bigoplus_{i_{1}\in I_{1}}\bigoplus_{i_{2}\in I_{2}}\left(  \left(
k_{i_{1}}\otimes\sum_{j_{1}\in J_{i_{1}}}m_{i_{1},j_{1}}\right)  \uplus\left(
k_{i_{2}}\otimes\sum_{j_{2}\in J_{i_{2}}}m_{i_{2},j_{2}}\right)  \right) \\
&  \equiv\bigoplus_{i_{1}\in I_{1}}\left(  \left(  k_{i_{1}}\otimes\sum
_{j_{1}\in J_{i_{1}}}m_{i_{1},j_{1}}\right)  \uplus\left(  \bigoplus_{i_{2}\in
I_{2}}\left(  k_{i_{2}}\otimes\sum_{j_{2}\in J_{i_{2}}}m_{i_{2},j_{2}}\right)
\right)  \right) \\
&  \equiv\left(  \bigoplus_{i_{1}\in I_{1}}\left(  k_{i_{1}}\otimes\sum
_{j_{1}\in J_{i_{1}}}m_{i_{1},j_{1}}\right)  \right)  \uplus\left(
\bigoplus_{i_{2}\in I_{2}}\left(  k_{i_{2}}\otimes\sum_{j_{2}\in J_{i_{2}}%
}m_{i_{2},j_{2}}\right)  \right) \\
&  \equiv\zeta_{1}^{\prime}\uplus\zeta_{2}^{\prime}  \equiv\zeta_{1}\uplus\zeta_{2}=\zeta
\end{align*}
where the first equivalence holds by definition of $k'_{i_1}$ ($i_1 \in I_1$) and  $k'_{i_2}$ ($i_2 \in I_2$) and Lemma \ref{full_coal}. The second equivalence
follows by Lemma \ref{fullno3}, and the third and fourth ones by Proposition
\ref{wcoal_prop}(iv). By definition of $k'_{i_1}$ for every $i_1 \in I_1$ and  $k'_{i_2}$ for every $i_2 \in I_2$, we conclude that $\zeta'$ is the required full normal form satisfying Statements (i) and (ii). 

Therefore, we have shown that for every $\zeta\in PCL(K,P)$ we can construct
an equivalent $\zeta^{\prime}\in PCL(K,P)$ in full normal form. The uniqueness of $\zeta^{\prime}$, up to equivalence, is derived in a straightforward way using Statements (i) and (ii). 
It remains to prove our claim for the time complexity of the above presented algorithm for the construction of $\zeta'$. We should note that the input of the algorithm consists of the set $P$ of ports and the wPCL formula $\zeta$. If $\zeta =k$ with $k \in K$, and $\zeta$ is involved in an $\oplus$ or $\uplus$ operation, then we use the full normal form      $\zeta'=\bigoplus_{\emptyset\not =N\subseteq M}\left(
k\otimes\sum_{m\in N}m\right)$. The computation of $\zeta'$ requires a doubly exponential time since we compute firstly the set $M$ all full monomials over $P$ and then all  nonempty subsets of $M$.      
If $\zeta=f$ is a PCL formula, then we conclude our claim by Theorem \ref{PCL-compl}  and Proposition \ref{fullno6}. Next, by preserving the notations from the first part of our proof,  for the induction steps for $\oplus$, $\otimes$, and $\uplus$ operations, we note that the construction of $\zeta'$ by $\zeta_1'$ and $\zeta'_2$ is polynomial in every case.   
\end{proof}
\hide{
\begin{example}
[Example \ref{m/s_ex} continued]We shall compute the full normal form of the
\emph{wPCL} formula
\[
\zeta={\sim} \left(\left(  \varphi_{1,1}\oplus\varphi_{1,2}\right)  \uplus\left(
\varphi_{2,1}\oplus\varphi_{2,2}\right)\right)
\]
which formalizes the weighted Master/Slave architecture for two masters
$M_{1},M_{2}$ and two slaves $S_{1},S_{2}$ with ports $m_{1},m_{2}$ and
$s_{1},s_{2}$, respectively. We have
\begin{align*}
& \left(  \varphi_{1,1}\oplus\varphi_{1,2}\right)  \uplus\left(
\varphi_{2,1}\oplus\varphi_{2,2}\right)    \\
&  \equiv\left(  \left(  \varphi_{1,1}\uplus\varphi_{2,1}\right)
\oplus\left(  \varphi_{1,2}\uplus\varphi_{2,1}\right)  \right)  \oplus\left(
\left(  \varphi_{1,1}\uplus\varphi_{2,2}\right)  \oplus\left(  \varphi
_{1,2}\uplus\varphi_{2,2}\right)  \right)  \\
&  \equiv\left(  \left(  k_{1,1}\otimes\phi_{1,1}\right)  \uplus\left(
k_{2,1}\otimes\phi_{2,1}\right)  \right)  \oplus\left(  \left(  k_{1,2}%
\otimes\phi_{1,2}\right)  \uplus\left(  k_{2,1}\otimes\phi_{2,1}\right)
\right)  \\
&
\;\;\;\;\;\;\;\;\;\;\;\;\;\;\;\;\;\;\;\;\;\;\;\;\;\;\;\;\;\;\;\;\;\;\;\;\;\;\;\;\;\;\oplus
\left(  \left(  k_{1,1}\otimes\phi_{1,1}\right)  \uplus\left(  k_{2,2}%
\otimes\phi_{2,2}\right)  \right)  \oplus\left(  \left(  k_{1,2}\otimes
\phi_{1,2}\right)  \uplus\left(  k_{2,2}\otimes\phi_{2,2}\right)  \right)  \\
&  \equiv\left(  \left(  k_{1,1}\otimes k_{2,1}\right)  \otimes\left(
\phi_{1,1}\uplus\phi_{2,1}\right)  \right)  \oplus\left(  \left(
k_{1,2}\otimes k_{2,1}\right)  \otimes\left(  \phi_{1,2}\uplus\phi
_{2,1}\right)  \right)  \\
&
\;\;\;\;\;\;\;\;\;\;\;\;\;\;\;\;\;\;\;\;\;\;\;\;\;\;\;\;\;\;\;\;\;\;\;\;\;\;\;\;\;\;\oplus
\left(  \left(  k_{1,1}\otimes k_{2,2}\right)  \otimes\left(  \phi_{1,1}%
\uplus\phi_{2,2}\right)  \right)  \oplus\left(  \left(  k_{1,2}\otimes
k_{2,2}\right)  \otimes\left(  \phi_{1,2}\uplus\phi_{2,2}\right)  \right)  \\
&  \equiv\left(  \left(  k_{1,1}\otimes k_{2,1}\right)  \otimes\left(
\phi_{1,1}+\phi_{2,1}\right)  \right)  \oplus\left(  \left(  k_{1,2}\otimes
k_{2,1}\right)  \otimes\left(  \phi_{1,2}+\phi_{2,1}\right)  \right)  \\
&
\;\;\;\;\;\;\;\;\;\;\;\;\;\;\;\;\;\;\;\;\;\;\;\;\;\;\;\;\;\;\;\;\;\;\;\;\;\;\;\;\;\;\oplus
\left(  \left(  k_{1,1}\otimes k_{2,2}\right)  \otimes\left(  \phi_{1,1}%
+\phi_{2,2}\right)  \right)  \oplus\left(  \left(  k_{1,2}\otimes
k_{2,2}\right)  \otimes\left(  \phi_{1,2}+\phi_{2,2}\right)  \right) \\
& = \zeta'
\end{align*}
where the third equivalence holds by Lemma \ref{fullno3}, and the last one
follows easily since the full monomials $\phi_{i,j}$, $1\leq i,j\leq2$ are
pairwise non-equivalent, and for every one there exists a unique interaction
satisfying it. \\ We let $M$ to be the set of all full monomials over $P$ such that for every
$m,m^{\prime}\in M$, if $m\neq m^{\prime}$, then $m\not \equiv m^{\prime}$. Then we get
\begin{align*}
\zeta &  \equiv \zeta' \oplus ( \zeta' \uplus 1) \\
& \equiv \zeta' \oplus \left( \zeta' \uplus \left( \bigoplus_{\emptyset\not =N\subseteq M} \left(1\otimes \sum_{m\in N}m \right)\right) \right)  \\
& \equiv \zeta' \oplus \left( \bigoplus_{\emptyset\not =N\subseteq M \setminus \{\phi_{1,1}, \phi_{2,1} \} } \left( k_{1,1}\otimes k_{2,1}\right)\otimes \left( \phi_{1,1}+\phi_{2,1}+\sum_{m\in N}m\right)  \right)  \\ & \hspace{5cm}\oplus \left( \bigoplus_{\emptyset\not =N\subseteq M  \setminus \{\phi_{1,2}, \phi_{2,1} \}} \left( k_{1,2}\otimes k_{2,1}\right)\otimes \left( \phi_{1,2}+\phi_{2,1}+\sum_{m\in N}m\right)  \right)\\ & \hspace{5cm}\oplus \left( \bigoplus_{\emptyset\not =N\subseteq M \setminus \{\phi_{1,1}, \phi_{2,2} \} } \left( k_{1,1}\otimes k_{2,2}\right)\otimes \left( \phi_{1,1}+\phi_{2,2}+\sum_{m\in N}m\right)  \right) \\ & \hspace{5cm} \oplus \left( \bigoplus_{\emptyset\not =N\subseteq M \setminus \{\phi_{1,2}, \phi_{2,2} \}} \left( k_{1,2}\otimes k_{2,2}\right)\otimes \left( \phi_{1,2}+\phi_{2,2}+\sum_{m\in N}m\right)  \right).
\end{align*}
\end{example} }

Next we show that the equivalence problem for wPCL formulas is decidable in doubly exponential time.

\begin{theorem}
Let $K$ be a commutative semiring and $P$ a set of ports. Then, for every $\zeta,\xi \in PCL(K,P)$ the equality $\|\zeta\|=\|\xi\|$ is decidable in doubly exponential time. 
\end{theorem}

\begin{proof}
By Theorem \ref{thm_fnf} we can effectively construct, in doubly exponential time, wPCL formulas $\zeta',\xi'$ in full normal form such that $\|\zeta\|=\|\zeta'\|$ and $\|\xi\|=\|\xi'\|$. Let us assume that 
$\zeta'=\bigoplus_{i\in I}\left(k_{i}\otimes\sum_{j\in J_{i}}m_{i,j}\right)  $  and 
$\xi'=\bigoplus_{l\in L}\left(k'_{l}\otimes\sum_{r\in M_l}m'_{l,r}\right)  $ which moreover satisfy Statements (i) and (ii). Then, by Statement (ii) we get that $\|\zeta'\|=\|\xi'\|$ iff the following requirements (1)-(3) hold:
\begin{itemize}
\item[1)] $\mathrm{card}(I)=\mathrm{card}(L)$,
\item[2)] $\{k_i \mid i \in I  \}= \{k'_l \mid l\in L \}$, and
\item[3)] 
\begin{itemize}
\item[a)] if $\mathrm{card}(I)=  \mathrm{card}(\{k_i \mid i \in I  \})$, then $\sum_{j\in J_{i}}m_{i,j} \equiv \sum_{r\in M_l}m'_{l,r}$ for every $i \in I$ and $l \in L$ such that $k_i=k'_l$, 
\item[or]
\item[b)] if $\mathrm{card}(I)>\mathrm{card}(\{k_i \mid i \in I  \})$, then we get 
$\zeta' \equiv \bigoplus_{i'\in I'}\left(k_{i'}\otimes\bigsqcup_{i\in R_{i'}}\sum_{j\in J_{i}}m_{i,j}\right)  $ where $I'  \varsubsetneq I$,  $k_{i'}$'s ($i' \in I'$) are pairwise disjoint, and $R_{i'}$ ($i' \in I'$) is the set of all $i$ in $I$ such that $k_i=k_{i'}$. Similarly, we get
$\xi' \equiv \bigoplus_{l'\in L'}\left(k'_{l'}\otimes\bigsqcup_{l\in S_{l'}}\sum_{r\in M_l}m'_{l,r}\right)  $ 
where $L'  \varsubsetneq L$, $k'_{l'}$'s ($l' \in L'$) are pairwise disjoint, and $S_{l'}$ ($l' \in L'$) is the set of all $l$ in $L$ such that $k'_{l}=k'_{l'}$. Then $\bigsqcup_{i\in R_{i'}}\sum_{j\in J_{i}}m_{i,j} \equiv \bigsqcup_{l\in S_{l'}}\sum_{r\in M_l}m'_{l,r}$ for every $i' \in I'$ and $l' \in L'$ such that $k_{i'}=k'_{l'}$.
\end{itemize}
\end{itemize}
By Lemma \ref{fullno5} the decidability of equivalences in (3a) is reduced to decidabilty of equality of sets of interactions corresponding to full monomials, whereas the decidabilitty of equivalences in (3b) is reduced to the decidability of equality of sets whose elements are sets of interactions corresponding to full monomials. All the aforementioned full monomials, and hence their corresponding interaction sets,  have been computed in the process of the construction of full normal forms $\zeta'$ and $\xi'$ (cf. the proof of Theorem \ref{thm_fnf}). Therefore, the decidability for the required equalities and equivalences in (1)-(3) above requires at most polynomial time, and our proof is completed. 
\end{proof}

In the last part of this section, we wish to show that wPCL is sound. For this, we need firstly to introduce the notion of soundness  in the setting of wPCL. According to our best knowledge soundness has been defined only for multi-valued logics, with values in the bounded distributive lattice $[0,1]$ with the usual $\mathrm{max}$ and $\mathrm{min}$ operations (cf. \cite{Ha:Me}). Let $\zeta \in PCL(K,P)$ and $\Sigma=\{\zeta_{1},\ldots
,\zeta_{n}\}$ with $\zeta_1, \ldots, \zeta_n \in PCL(K,P)$. We say that
$\Sigma$ \emph{proves} $\zeta$ and write
$\Sigma\vdash\zeta$ if $\zeta$ is derived from the formulas in $\Sigma$, using
the axioms of PCL (cf. page 17 in \cite{Ma:Co}) and the equivalences of Propositions
\ref{wcoal_prop} and \ref{conj_over_coal}. Furthermore, we
write $\Sigma\models\zeta$ iff for every $\gamma \in \bigcap_{1 \leq i \leq n}\mathrm{supp}(\zeta_i)$ such that $\left\Vert\zeta_{1}\right\Vert(\gamma) = \ldots= \left\Vert\zeta_{n}\right\Vert(\gamma)$, then $\left\Vert\zeta\right\Vert(\gamma)=\left\Vert\zeta_{1}\right\Vert(\gamma)$. Then, we say that wPCL is \emph{sound} if $\Sigma
\vdash\zeta$ implies $\Sigma\models\zeta$.  In particular, if $ f_1, \ldots, f_n, f $ are PCL formulas over $P$, then we write $ \{f_1, \ldots, f_n \} \vdash f$ if $f$ is derived from formulas in $ \{f_1, \ldots, f_n \}$ using the axioms of PCL (cf. page 17 in \cite{Ma:Co}). Similarly, we write $ \{f_1, \ldots, f_n \} \models  f $ if for every $\gamma \in C(P)$ we get $\gamma \models f$ whenever $\gamma \models f_i$ for every $1\leq i \leq n$.

\begin{theorem}
Let $K$ be a commutative semiring and $P$ a set of ports. Then the \emph{wPCL} over $P$ and $K$ is sound.
\end{theorem}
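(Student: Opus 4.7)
The plan is to handle the two directions separately and with quite different tools. Soundness is essentially a bookkeeping induction on derivation length, exploiting the fact that every rule available to us has already been proved to be a semantic equivalence. Completeness, in contrast, rests entirely on Theorem \ref{thm_fnf}: two semantically equivalent formulas share a common full normal form, and the construction in that theorem is itself a derivation in the allowed calculus.

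For soundness, let $\Sigma=\{\zeta_{1},\ldots,\zeta_{n}\}$ and suppose $\Sigma\vdash\zeta$. A derivation proceeds by repeated replacement of subformulas according to (i) the axioms of \emph{PCL} collected in Propositions \ref{conj-inter}--\ref{PCL-prop_cl} (all of which are equivalences in $\mathit{PCL}$) and (ii) the equivalences $\left(\zeta_{1}\uplus\zeta_{2}\right)\uplus\zeta_{3}\equiv\zeta_{1}\uplus\left(\zeta_{2}\uplus\zeta_{3}\right)$ etc.\ from Propositions \ref{wcoal_prop}, \ref{wPCL_prop1}, \ref{conj_over_coal}, each of which we have already verified pointwise on $C(P)$. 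Since $\equiv$ is a congruence with respect to $\oplus$, $\otimes$, $\uplus$ and $\sim\cdot$ (the semantic clauses of Definition \ref{wPCL_sem} compute $\|\cdot\|$ only from $\|\cdot\|$-values of subformulas), a trivial induction on the length of the derivation shows that every formula produced from the hypotheses is semantically equivalent to each of $\zeta_{1},\ldots,\zeta_{n}$. Hence $\zeta_{1}\equiv\ldots\equiv\zeta_{n}\equiv\zeta$, which is precisely $\Sigma\models\zeta$.

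For completeness, assume $\Sigma\models\zeta$, that is, $\zeta_{1}\equiv\ldots\equiv\zeta_{n}\equiv\zeta$. Apply Theorem \ref{thm_fnf} to each of $\zeta_{1},\ldots,\zeta_{n}$ and to $\zeta$ itself; since the full normal form is unique up to $\equiv$, these $n+1$ formulas share a \emph{single} full normal form $\eta$, up to equivalence. Now the crucial observation is that the construction of $\eta$ carried out in the proof of Theorem \ref{thm_fnf} is itself a formal derivation: every equivalence invoked there (distributivity of $\otimes$ over $\oplus$, Lemmas \ref{fullno3}, \ref{fullno5}, \ref{fullno8}, Proposition \ref{fullno6}, and the three propositions \ref{wcoal_prop}, \ref{wPCL_prop1}, \ref{conj_over_coal}) is either one of our explicitly permitted rules or is itself derivable from them by finitely many such steps, together with the semiring axioms acting on the weight level. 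Consequently $\zeta_{i}\vdash\eta$ for every $i$, and, since each derivation step is an equivalence and therefore reversible, also $\eta\vdash\zeta$. Concatenating gives $\Sigma\vdash\zeta$.

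The main obstacle is exactly this ``reduction-to-normal-form is a formal derivation'' step in the completeness argument: one must verify carefully that no auxiliary equivalence silently used inside the full normal form construction falls outside the calculus. Concretely, Proposition \ref{fullno6} relies on Theorem 4.43 of \cite{Ma:Co}, whose derivation employs only the \emph{PCL} axioms summarised in Propositions \ref{conj-inter}--\ref{PCL-prop_cl}; Lemmas \ref{fullno3} and \ref{fullno8} use only Propositions \ref{wcoal_prop} and \ref{wPCL_prop1} together with commutativity and associativity of $\otimes$; and the manipulations in Theorem \ref{thm_fnf} for the $\uplus$-case use Lemma \ref{fullno3} and Proposition \ref{wPCL_prop1}. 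Once all these dependencies are tracked, both soundness and completeness follow without further work.
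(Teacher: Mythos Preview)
Your proposal is correct and follows essentially the same strategy as the paper's own proof: soundness is obtained because every admissible rule is a semantic equivalence (the paper simply cites Definition~\ref{wPCL_sem} and Propositions~\ref{wcoal_prop}, \ref{wPCL_prop1}, \ref{conj_over_coal}), and completeness is obtained by appeal to Theorem~\ref{thm_fnf}. Your write-up is in fact more explicit than the paper's, which for completeness merely says ``we get $\Sigma\vdash\zeta$ by our Theorem~\ref{thm_fnf}''; you spell out why the uniqueness of the full normal form together with the fact that its construction is itself a chain of admissible equivalences yields the required derivation, and you correctly track the dependencies (Lemmas~\ref{fullno3}, \ref{fullno5}, \ref{fullno8}, Proposition~\ref{fullno6}, and the underlying \emph{PCL} axioms from \cite{Ma:Co}).
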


\begin{proof}
If $\Sigma$ is
a set of PCL formulas and $f$ a PCL  formula over $P$, then
$\Sigma\vdash f$ implies $\Sigma\models f$ holds true since PCL
is sound (cf. \cite{Ma:Co}). If $\Sigma$ is a set of wPCL 
formulas and $\zeta$ a wPCL formula in $PCL(K,P)$ such that $\Sigma\vdash\zeta$, then
we get $\Sigma\models\zeta$ by Definition \ref{wPCL_sem} and Propositions
\ref{wcoal_prop} and \ref{conj_over_coal}. Therefore,
wPCL is sound, and we are done.
\end{proof}

\section*{Discussion}

\begin{figure}
	\begin{center}
		\includegraphics[height=4.5cm, width=14cm]{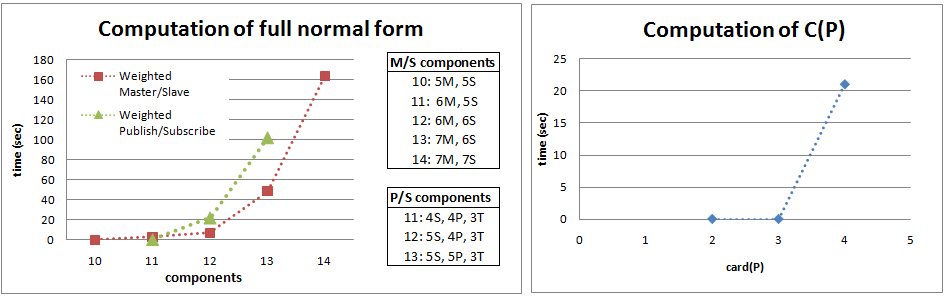}
	\end{center}
	\caption{Time required for computation of full normal forms and $C(P).$}
	\label{maude}
\end{figure}

\begin{itemize}
\item[a)] In our definition of interactions (cf. page \pageref{interaction}) over a set $P$ of ports we excluded, following \cite{Ma:Co}, the empty interaction as well as the empty set of interactions. Though the empty interaction adds no value in single architectures, it plays an important role in architectural composition (cf. for instance \cite{At:Ge,Bo:Lo}). More precisely, the empty set of interactions represents in a strict mathematical way the case where two architectures cannot be composed. It represents also a deadlock state of a component-based system. Nevertheless, we consider no empty interactions in wPCL because of the following reason. Our wPCL is based on PCL of \cite{Ma:Co} where the authors also consider no empty interactions. Clearly the empty interaction satifies only the PIL formula $\mathrm{false}$. By adding the empty interaction in PCL of \cite{Ma:Co} several properties do not hold any more. For instance the equivalences $f + \mathrm{false} \equiv \mathrm{false}$ (Proposition 4.4 in \cite{Ma:Co}) for $f \not \equiv \mathrm{false}$ and $\sim f_1 + \sim f_2 \equiv \sim f_1 \wedge \sim f_2$ (Proposition 4.27(2) in \cite{Ma:Co}) for $f_1 \not \equiv \mathrm{false}$ and $f_2 \equiv \mathrm{false}$. Specifically, the first one is used in the computation of the full normal form of a PCL formula (for instance in proof of Proposition 4.19 and in turn in proof of Proposition 4.35 in \cite{Ma:Co}). We conjecture that one can rebuilt the theory of PCL and of our wPCL by considering the empty interaction, but of course this is beyond the scope of this paper.            

\item[b)] In Theorem \ref{thm_fnf} we proved that for every wPCL formula $\zeta \in PCL(K,P)$ we can effectively construct an equivalent one $\zeta'$ in full normal form, i.e., $\zeta'= \bigoplus_{i\in I} \left( k_i\otimes \sum_{j\in J_i}m_{i,j} \right)$. Furthermore, we used this result to prove decidability of equivalence of wPCL formulas. However our method arises the following question: Why should one follow the technical constructions of Theorem \ref{thm_fnf} and not consider the ``obvious" construction of full formal form as $\zeta''=\bigoplus_{\gamma \in C(P)}\left( \left\Vert \zeta\right\Vert (\gamma) \otimes \sum_{a \in \gamma}m_a \right)$. As we have shown in Theorem \ref{thm_fnf} the worst case run time for the computation of $\zeta'$ is doubly exponential. On the other hand, the construction of $\zeta''$ requires ``always" doubly exponential time since it computes all $\gamma \in C(P)$. From the application point of view it is desirable to compute full normal forms in an automatic way \cite{Ma:Co}, and for this, the above difference for the time needed for $\zeta'$ and $\zeta''$ is of great importance. Following the methodology of \cite{Ma:Co}, we developed a code in Maude 2.7.1 rewriting system \cite{Maude} for the computation of full normal forms of weighted  Master/Slave and Publish/Subscribe architectures. The left diagram in Figure \ref{maude} presents the time needed for the construction of full normal forms of these architectures for several numbers of components. The right one shows the time needed to compute $C(P)$ for several numbers of $\mathrm{card}(P)$. One can easily see that the time required to compute $C(P)$ for $\mathrm{card}(P)=4$ is greater than  the the time required to compute the full normal form of weighted Master/Slave architecture with 6 Masters and 6 Slaves (i.e., 12 ports in total),  and it is the same with the time needed to compute the full normal of weighted Publish/Subscribe architecture with  5 Subscribers, 4 Publishers and 3 Topics (i.e., 12 ports in total). Our code run on a 64-bit Linux machine with a 2.10 GHz Intel i3-2310M CPU and 1 Gb memory limit.
\end{itemize}

\section*{Conclusion}
We introduced a wPCL over a set of ports $P$ and a commutative semiring $K$ and investigated several properties of the class of polynomials obtained as semantics of this logic. For some of that properties we required our semiring to be idempotent. In our wPCL we slightly modified the coalescing operation in comparison to its boolean counterpart. Definitely, this modification is not an essential restriction since every architecture described in the boolean setup is described as well in our weighted setup, as we showed in our examples. We proved that for every wPCL formula $\zeta$ we can effectively construct, in doubly exponential time,  an equivalent one $\zeta'$  in full normal form. This result implied the decidability of the equivalence problem for wPCL formulas. We defined a notion of soundness for wPCL and proved that it is sound. It should be noted that though PCL is proved to be complete \cite{Ma:Co}, for wPCL trivially this is not the case, at least with our definitions of $\vdash$ and $\models$. It is an open problem whether we can develop the theory of this paper by relaxing the commutativity property of the semiring $K$ and thus obtaining our results for a larger class of semirings. Furthermore, it should be very interesting to investigate the wPCL over more general weight structures which can describe further properties like average, limit inferior, limit superior, and discounting (cf. for instance \cite{Dr:Av}). Work on progress investigates the first- and second-order levels of weighted configuration logics which are motivated by applications to architecture styles with quantitative characteristics. 

\noindent \textbf{Acknowledgements.} We should like to express our gratitude to Joseph Sifakis, Simon Bliudze, and Anastasia Mavridou for useful discussions and  clarifications on \cite{Ma:Co}. We are also grateful to two anonymous reviewers for pointing out a mistake and fruitful suggestions which helped us to improve the presentation of the paper.

\end{document}